\documentclass[11pt]{article}

\usepackage[letterpaper,margin=1in]{geometry}
\usepackage[T1]{fontenc}
\usepackage[utf8]{inputenc}
\usepackage{lmodern}
\usepackage{microtype}
\usepackage{amsmath,amssymb,amsthm,mathtools}
\usepackage{booktabs,tabularx,array}
\usepackage{enumitem}
\usepackage{float}
\usepackage{xcolor}
\usepackage[numbers]{natbib}

\usepackage[normalem]{ulem}

\newif\ifshowcomments
\showcommentstrue
\ifshowcomments
  \usepackage[colorinlistoftodos,textsize=small]{todonotes}
\else
  \usepackage[disable]{todonotes}
\fi

\usepackage{xspace}
\usepackage{tikz}
\usetikzlibrary{arrows.meta,positioning,calc,fit,backgrounds,decorations.pathreplacing,shapes.geometric}
\usepackage{hyperref}
\usepackage{aliascnt}
\usepackage[nameinlink,noabbrev]{cleveref}

\definecolor{ravblue}{RGB}{25,74,120}
\definecolor{ravlight}{RGB}{243,247,251}
\definecolor{ravgreen}{RGB}{32,112,72}
\definecolor{ravorange}{RGB}{164,91,18}
\definecolor{ravgray}{RGB}{91,101,111}

\hypersetup{
  pdftitle={Anchoring for Truthfulness: The Random-Anchor Volume Mechanism for Multi-Facility Location},
  pdfauthor={Haris Aziz, Simon Mackenzie, Mashbat Suzuki},
  colorlinks=true,
  linkcolor=ravblue,
  citecolor=ravblue,
  urlcolor=ravblue
}

\newtheorem{mainthm}{Theorem}

\newcommand{\restatedhead}{}
\newtheorem*{restatedmainthmaux}{\restatedhead}
\newenvironment{restatedmainthm}[2]{%
  \renewcommand{\restatedhead}{Theorem~\ref{#1}}%
  \begin{restatedmainthmaux}[#2, restated]}{\end{restatedmainthmaux}}

\newtheorem{theorem}{Theorem}[section]
\newaliascnt{lemma}{theorem}
\newtheorem{lemma}[lemma]{Lemma}
\aliascntresetthe{lemma}
\newaliascnt{proposition}{theorem}
\newtheorem{proposition}[proposition]{Proposition}
\aliascntresetthe{proposition}
\newaliascnt{corollary}{theorem}

\aliascntresetthe{corollary}
\theoremstyle{definition}
\newaliascnt{definition}{theorem}
\newtheorem{definition}[definition]{Definition}
\aliascntresetthe{definition}
\newaliascnt{example}{theorem}
\newtheorem{example}[example]{Example}
\aliascntresetthe{example}
\theoremstyle{remark}
\newaliascnt{remark}{theorem}

\aliascntresetthe{remark}

\crefname{equation}{Equation}{Equations}
\crefname{figure}{Figure}{Figures}
\crefname{table}{Table}{Tables}
\crefname{section}{Section}{Sections}
\crefname{subsection}{Section}{Sections}
\crefname{mainthm}{Theorem}{Theorems}
\Crefname{mainthm}{Theorem}{Theorems}
\crefname{theorem}{Theorem}{Theorems}
\crefname{lemma}{Lemma}{Lemmas}
\crefname{proposition}{Proposition}{Propositions}
\crefname{corollary}{Corollary}{Corollaries}
\crefname{definition}{Definition}{Definitions}
\crefname{example}{Example}{Examples}
\crefname{remark}{Remark}{Remarks}

\newcommand{\RAV}{\textnormal{\textsc{RAV}}\xspace}
\newcommand{\RAVk}[1]{\texorpdfstring{\ensuremath{\textnormal{\textsc{RAV}}_{#1}}}{RAV-#1}\xspace}
\newcommand{\SC}{\operatorname{SC}}
\newcommand{\cost}{\operatorname{cost}}
\newcommand{\OPT}{\operatorname{OPT}}
\newcommand{\E}{\mathbb E}
\newcommand{\R}{\mathbb R}
\newcommand{\given}{\,\middle|\,}
\newcommand{\abs}[1]{\left|#1\right|}
\newcommand{\set}[1]{\left\{#1\right\}}

\ifshowcomments
  \newcommand{\haris}[1]{\textcolor{blue}{Haris says: #1}}
  \newcommand{\simon}[1]{\textcolor{red}{Simon says: #1}}

 \newcommand{\mash}[1]{\textcolor{teal}{Mashbat says: #1}}
\else
  \newcommand{\haris}[1]{}
  \newcommand{\simon}[1]{}
  \newcommand{\mash}[1]{}
\fi

\setlist[itemize]{topsep=4pt,itemsep=2pt,parsep=1pt,leftmargin=1.5em}
\setlist[enumerate]{topsep=4pt,itemsep=3pt,parsep=1pt,leftmargin=1.7em}
\tikzset{
  ravpoint/.style={circle,fill=ravblue,inner sep=2.2pt},
  ravanchor/.style={diamond,fill=ravorange,inner sep=2.5pt},
  ravbox/.style={draw=ravblue,rounded corners=2pt,fill=ravlight,
    align=center,text width=3.0cm,minimum height=1.1cm},
  ravarrow/.style={-{Latex[length=2mm]},thick,draw=ravblue},
  gapbrace/.style={decorate,decoration={brace,amplitude=4pt},thick}
}

\title{\textbf{Anchoring for Truthfulness: The Random-Anchor Volume Mechanism for Multi-Facility Location}}
\author{%
  Haris Aziz\\
  {\small UNSW Sydney}\\
  {\small\texttt{haris.aziz@unsw.edu.au}}
  \and
  Simon Mackenzie\\
  {\small UNSW Sydney}\\
  {\small\texttt{simon.william.mackenzie@gmail.com}}
  \and
  Mashbat Suzuki\\
  {\small UNSW Sydney}\\
  {\small\texttt{mashbat.suzuki@unsw.edu.au}}%
}
\date{}

\begin{document}
\maketitle
\begin{abstract}
We study the strategyproof placement of \(k\) facilities on the real line for
\(n\) agents who privately report their locations, without monetary transfers.
For two facilities, the Proportional Mechanism of Lu, Sun, Wang, and Zhu
(2010) is strategyproof in expectation and achieves a constant-factor
approximation to the optimal social cost. Whether such a guarantee is possible
for three facilities in the standard model, where each agent is served by her
nearest open facility, has remained open.

We resolve this question affirmatively by introducing the
\emph{Random-Anchor Volume} mechanism. The mechanism first opens a facility at
the report of a uniformly random agent, called the \emph{anchor}, and then
jointly selects two additional reports, assigning each pair probability
proportional to the product of the two consecutive gaps formed by the pair and
the anchor. We prove that the mechanism is strategyproof in expectation and
has expected social cost at most \(8\OPT_3\), where \(\OPT_k\) denotes the
minimum social cost achievable using at most \(k\) facilities.

The mechanism naturally extends to every \(k\geq 2\) by selecting \(k-1\)
additional reports with probability proportional to the product of the
consecutive gaps among them and the anchor. Under truthful reporting, this
generalization has expected social cost at most
\(4(k-1)\OPT_k\). Its incentive guarantee, however, has a sharp boundary: the
mechanism is strategyproof in expectation for \(k\in\{1,2,3\}\), but is
manipulable for every \(k\geq 4\).

\end{abstract}

\section*{Generative AI disclosure}

During the development of this work, the authors made extensive use of
OpenAI's ChatGPT (GPT-5.6, ``Sol'').  The Random-Anchor Volume mechanism
was discovered by the system, and the results of this paper were then also
proved by it, through an extended research interaction.  The human authors'
contribution was chiefly to ask the relevant questions and to turn the
resulting material into a paper: refining the arguments, imposing
coherence, and improving the writing.  All mathematical statements, proofs,
and literature claims were verified by the human authors, who take full
responsibility for their correctness, originality, and presentation.

\newpage

\tableofcontents

\section{Introduction}\label{sec:introduction}

Facility location lies at the intersection of operations research and collective
decision making \cite{Daskin2013,Black1958,Moulin1988,BrandtEtAl2016}.  In
operations research, locating warehouses, hospitals, and emergency services
is a canonical optimization problem; in economics and social choice, a
location on a line is also an ideal point, and choosing a facility is a public
decision under single-peaked preferences.

The strategic version introduces a challenge that is absent from the
classical optimization problem.  A public authority may want to locate
clinics, schools, charging stations, or other services near the people who
will use them, but those locations must often be elicited from the users
themselves.  Monetary transfers may be infeasible, legally unavailable, or
simply inappropriate for such public decisions.  The authority must then
reconcile two objectives: reporting a true location should be a dominant
strategy, and the chosen facilities should minimize the costs of the agents. 

This tension is particularly clean on the real line.  
We consider $n$ agents
and at most $k$ interchangeable facilities.  An agent's cost is her distance
from her true location to the nearest open facility, and social cost is the
sum of these distances.  A randomized mechanism is \emph{strategyproof in
expectation} if no agent can reduce her expected distance by changing only
her own report.  Thus a deviation may help in some realizations and hurt in
others, but it cannot help on average over the mechanism's random choices.
We measure efficiency by the worst-case ratio between expected social cost
and the optimum $k$-facility social cost.

From the viewpoint of computation, the striking feature is that the
underlying one-dimensional $k$-median problem is tractable: approximation is
needed here to overcome incentives, not computational hardness.  From the
viewpoint of economic theory, the same ratio gives a quantitative price of
dominant-strategy implementation without transfers.  This is precisely the
perspective of \emph{approximate mechanism design without money}, introduced
by Procaccia and Tennenholtz with facility location as its central case study
\cite{ProcacciaTennenholtz2013}.

The one-facility case suggests that the two objectives may coexist perfectly.
Opening at a median report minimizes social cost and is group-strategyproof.
The result belongs to the classical theory of majority rule under
single-peaked preferences developed by Black \cite{Black1948,Black1958}.
Moulin subsequently characterized anonymous deterministic strategyproof rules on the
line as generalized medians obtained by adding fixed phantom reports
\cite{Moulin1980,Moulin1988}.  
In contrast, when there are multiple facilities, there is an inherent tradeoff between strategyproofness and social cost minimization. For example, for two facilities, every deterministic strategyproof mechanism has an approximation ratio $\Omega(n)$~\citep{LuEtAl2010}. 
To achieve strategyproofness together with a sublinear approximation ratio, a natural recourse is randomization. Indeed, randomized mechanisms were also explored by Procaccia and Tennenholtz~\cite{ProcacciaTennenholtz2013} in their quest to achieve better approximation bounds. 

For two facilities, Lu, Sun, Wang,
and Zhu~\cite{LuEtAl2010} proposed the randomized Proportional Mechanism that first chooses an agent uniformly at random 
and then chooses the second facility-hosting agent with probability
proportional to her distance from the first. They proved that it is strategyproof in
expectation and achieves approximation factor $4$ in every metric space. For three facilities,  they also proposed a different  mechanism that is strategyproof (in expectation) and achieves an $O(n)$ approximation. 
They asked whether a constant approximation is possible with three
or more facilities~\cite[Secs.~4.3 and 6]{LuEtAl2010}.  The later journal version of Procaccia
and Tennenholtz's foundational paper summarized the difficulty by observing
that the intuition behind the positive two-facility results ``already
collapses'' at three facilities, and identified the extension to three and
beyond as a central open problem
\cite[Secs.~4.3 and 6]{ProcacciaTennenholtz2013}.  The 2021 survey of Chan et
al. still recorded only a population-dependent guarantee in the 
multi-facility model \cite[Sec.~2.3]{ChanEtAl2021}.

\subsection{Our contribution}

We present the first strategyproof (in expectation) randomized algorithm that places at most 3 facilities and guarantees a constant factor approximation to the optimal social cost. We call the mechanism \emph{Random-Anchor Volume (RAV)}. 
The mechanism first chooses one agent
uniformly and opens a facility at her report; this is the \emph{anchor}.  It then
considers every pair of other agents.  If the anchor and the two reports, in
sorted order, are $z_0\le z_1\le z_2$, the pair receives weight
$(z_1-z_0)(z_2-z_1).$
The pair is selected with probability proportional to this product, and the remaining facilities open at the two selected reports.

More generally, we define a family of Random-Anchor Volume mechanisms that apply to any target number of facilities. For each integer $k\geq 1$, there is a corresponding \RAVk{k} mechanism that places at most $k$ facilities. 
After fixing an anchor uniformly at random, \RAVk{k} assigns to each set of
$k-1$ further reports a weight equal to the product of the consecutive gaps
among the anchor and those reports. It then selects a set with probability proportional to its weight and opens facilities at the anchor and the selected reports.

Our first theorem identifies the exact truthfulness boundary of the \RAV
family.

\begin{mainthm}[Exact \RAV truthfulness boundary]\label{thm:intro-boundary}
For $k\in\{1,2,3\}$, the mechanism \RAVk{k} is
strategyproof in expectation.  In contrast, for every $k\ge4$, there exists a profile under which some agent can strictly lower her expected distance by misreporting.
\end{mainthm}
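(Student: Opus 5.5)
We will prove strategyproofness for $k\in\{1,2,3\}$ by fixing the reports $x_{-i}$ of all agents other than a deviator $i$ with true location $t$ and conditioning on the anchor. Since the anchor is uniform over agents, $i$'s expected cost under report $y$ is $\frac1n$ times the cost when $i$ is the anchor (a facility opens at $y$, so this term is $\E[\text{dist}]$ with one facility pinned at $y$), plus $\frac1n\sum_{j\ne i}$ of the cost when anchor $j$ is fixed and $i$ is merely a candidate. The plan is to show that each conditional term, viewed as a function of $y$, is minimized at $y=t$, or at least that their sum is. For $k=1$ the mechanism opens only the anchor, so $i$'s report affects only the event in which she is the anchor, where reporting $t$ gives cost $0$. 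For $k=2$ the weights are $|z_1-z_0|$, which is the Proportional Mechanism of \cite{LuEtAl2010}, and we will reproduce their argument: when $i$ is not the anchor, she can only make herself a candidate, and by moving away she increases her own weight. We will show via the triangle inequality that this does not decrease her expected distance.

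For $k=3$ we fix an anchor $a\neq i$ and write $i$'s conditional expected cost as a ratio $N(y)/D(y)$. Here $D(y)=W_0+W_1(y)$, with $W_0$ the total weight of pairs avoiding $i$ and $W_1(y)=\sum_{j}w(a,y,x_j)$ the weight of pairs containing $y$. The numerator is $N(y)=\sum_{\text{pairs }P\not\ni i} w_P\, d(t,P\cup\{a\}) + \sum_j w(a,y,x_j)\min(|t-y|,d(t,\{a,x_j\}))$. On each interval between consecutive points of $\{a\}\cup x_{-i}$, every weight $w(a,y,x_j)$ is piecewise linear or quadratic in $y$ (a product of two gaps, one of which may contain $y$). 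The key step is a convexity/monotonicity argument: the comparison $N(y)/D(y)\ge N(t)/D(t)$ is equivalent to $N(y)D(t)-N(t)D(y)\ge0$. We will expand this term by term and pair up contributions so that each is nonnegative, using $\min(|t-y|,\cdot)\ge \min(0,\cdot)$ corrections bounded by the increase in the weight. Intuitively, misreporting by $\delta$ raises the gap products by at most a factor that is linear in $\delta$, while $i$'s distance also rises by $\delta$. The product structure with only two gaps keeps this trade-off favourable. We will also handle the term in which $i$ is the anchor, where the deviation merely shifts the anchor and the gap products, and show it combines with the others. The hard core is this $k=3$ inequality. We will first try to prove it locally, showing that the one-sided derivative of $N/D$ in $y$ points back toward $t$, and reduce it to a polynomial inequality in the gaps that holds when there are only two gap factors.

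For $k\ge4$ we will construct an explicit counterexample. The natural route is a small profile, for example a tight cluster plus a few far points, in which an agent located inside a cluster benefits by reporting slightly outward. Doing so inflates the three-or-more-factor products so much that her selection probability grows faster than linearly, which drops her expected distance. We will first find a concrete instance for $k=4$ numerically (few agents, rational positions) and verify by exact computation that the expected cost strictly decreases. We will then lift it to every $k\ge4$ by adding $k-4$ extra well-separated points far away. These points are essentially always selected and multiply every relevant weight by a common factor, so the strict improvement persists in the limit; continuity of the expected cost in the positions then gives it at finite separation.
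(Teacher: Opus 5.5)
\paragraph{Assessment.} Your treatment of $k\in\{1,2\}$ matches the paper, and so does your reduction, for a fixed anchor $a\neq i$, to $N(y)D(t)-N(t)D(y)\ge0$. For $k=3$, however, everything after that reduction is a placeholder, and that is where the whole difficulty lies. Write $B(y)$ for your second sum. Then $N(y)D(t)-N(t)D(y)=B(y)\bigl(W_0+W_1(t)\bigr)-N(t)\bigl(W_1(y)-W_1(t)\bigr)$. Each summand of the expansion is indexed by a changed pair $\{y,x_j\}$ and an unchanged pair $P=\{x_l,x_m\}$, namely $w_P\bigl[w(a,y,x_j)\min(\abs{t-y},d(t,\{a,x_j\}))-\bigl(w(a,y,x_j)-w(a,t,x_j)\bigr)d(t,P\cup\{a\})\bigr]$, and these summands can be negative. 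For example, with the anchor at $0$, $t=1$, $x_j=9/10$, $P=\{-1,3\}$ and $y=1+\delta$, the summand equals $\tfrac{27}{10}\delta\bigl(\delta-\tfrac{9}{10}\bigr)<0$ for small $\delta>0$. The missing idea is to group each such summand with the two other summands of the same triple $\{x_j,x_l,x_m\}$, in which each of the three reports plays the changed role once, and then to prove that each triple block is nonnegative. That step needs genuinely new inequalities. When $d(t,\{a,r\})$ is at least the median of the values $d(t,\{a,x\})$ over the triple, the paper uses the four-point exchange $\Delta_0(r,x_1)\Delta_0(x_2,x_3)\le\Delta_0(r,x_2)\Delta_0(x_1,x_3)+\Delta_0(r,x_3)\Delta_0(x_1,x_2)$ (paper's notation, anchor at $0$). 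Otherwise it uses a one-centre product bound together with a concavity argument for the derivative of the piecewise-cubic block as a function of $s=\abs{t-r}$.

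Your heuristic, that weights move linearly in $\delta$ while her distance grows by $\delta$, applies equally to $k=4$, where the conclusion is false, so it cannot explain why $k=3$ works. Your local fallback also fails, because the conditional cost is not monotone in the report. Take the anchor at $0$, $t=1$, and other reports $-1$, $3/2$, $100$. The conditional cost is about $0.50$ at $y=1$, $0.96$ at $y=50$, $0.70$ at $y=100$ and $0.97$ at $y=150$. So its derivative does not always point toward $t$, and every $y$ must be compared with $t$ directly. (The event that $i$ is the anchor needs no combining with the others: truth costs her $0$ there.)

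For $k\ge4$ you have not produced a counterexample, and the plan around it has two further holes. Nothing shows that a few-agent instance exists. The paper's instance has anchor $0$, $t=1$, $r=1001/1000$, $100$ further agents at $1$, one at $2$, and $2^j$ agents at $-1/(100\cdot2^j)$ for $1\le j\le100$. It is tuned so that the first-order coefficient $101W_2^2-196W_2-199W_3+101$ of the incentive numerator is only about $-0.6$, a cancellation among terms of order $100$. Your cluster intuition is right, but the gain comes from newly positive-volume sets that contain both $r$ and another agent at $t$, and so cost her nothing. It does not come from superlinear growth of her own selection probability.

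Moreover, you ignore the averaging over anchors. A conditional gain can be wiped out by the other anchors, for instance by the event that the deviator is the anchor, where truth costs $0$ and $r$ costs $\abs{t-r}$. The paper first obtains a strict conditional gain $\delta<0$ for an anchor at $p$ and then adds $q$ agents at $p$. Each of these reproduces $\delta$, while all remaining anchors together contribute at most $\abs{t-r}+\sum_j\abs{t-y_j}$, independently of $q$.

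Finally, your lift to $k>4$ is not justified for the full mechanism. The far points can themselves be the anchor, and conditional on a far anchor the lottery tends to a global-tuple lottery on the original agents (essentially $\mathrm{Glob}_4$), which does not appear in the \RAVk{4} computation. Hence the full \RAVk{k} cost does not converge to the \RAVk{4} cost. Instead, add the far points at the fixed-anchor level, where the sets containing all of them dominate and the conditional costs do converge. Only then pass to the uniform anchor using the copies at $p$.
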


The cases $k=1$ and $k=2$ are random dictatorship and the Proportional
Mechanism of Lu et al.~\cite{LuEtAl2010}; the rule is new from $k=3$ onward.

For $k\geq 4$, we show that \RAVk{k} is not strategyproof (in expectation).

Our second theorem gives the social cost approximation guarantee of the RAV family of mechanisms. 

\begin{mainthm}[Social-cost approximation]\label{thm:intro-approximation}
For every \(k \geq 2\) and every profile \(\mathbf{x}\),
\[
  \E\!\left[
    \SC\bigl(\mathbf{x},\RAVk{k}(\mathbf{x})\bigr)
  \right]
  \leq 4(k-1)\OPT_k(\mathbf{x}).
\]
Thus, \(\RAVk{k}\) achieves a \(4(k-1)\)-approximation for social cost.
In particular, the strategyproof-in-expectation mechanism \(\RAVk{3}\)
achieves an approximation factor of \(8\).
\end{mainthm}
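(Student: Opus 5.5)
We want $\E[\SC]\le 4(k-1)\OPT_k$ for \RAVk{k}. The plan is to reduce to a $k$-means++-style potential argument adapted to the line: the product of consecutive gaps plays the role of $D^2$ sampling.

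Fix an optimal $k$-median solution with clusters $C_1,\dots,C_k$ (contiguous intervals on the line), cluster medians $c_j$, and costs $\OPT(C_j)$. The first step is the anchor. Since the anchor is a uniformly random agent, it falls in cluster $C_j$ with probability $|C_j|/n$. Conditional on that, the classical one-facility estimate gives expected cost of $C_j$ at most $2\OPT(C_j)$: placing a facility at a uniformly random point of $C_j$ costs at most $\OPT(C_j)+|C_j|\cdot\text{avg dist to }c_j=2\OPT(C_j)$.

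The second step is the conditional law of the remaining $k-1$ reports. Given the anchor $a$, I would write the probability of a tuple $z_1<\dots$ (merged with $a$) as proportional to $\prod_i g_i$, where the $g_i$ are the consecutive gaps. The key structural observation is that, conditioning on all selected points but one, say $y$ lying between neighbours $\ell<r$, the weight as a function of $y$ is $(y-\ell)(r-y)$ times a constant. If $y$ is outside all other points, the weight is instead linear in the distance to the nearest chosen point. In both cases the weight is at most $(r-\ell)\cdot\min(y-\ell,r-y)$, i.e.\ bounded by a quantity proportional to the current distance of $y$ to the chosen set. So each coordinate behaves like Proportional-Mechanism (``$D^1$'') sampling relative to the others.

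The third step, and the main obstacle, is an inductive or charging argument converting this into $4(k-1)$. Mimicking the two-facility analysis of Lu et al.\ (factor 4), I would show the following for each of the $k-1$ non-anchor selections. Consider the configuration in which every point except one selected point $y$ is held fixed, and let $D(x)$ be the distance of agent $x$ to that fixed set. Then the expected cost is at most $D$-cost of covered-by-$y$ clusters plus uncovered clusters. Because sampling is proportional to (at most) $D(y)$, the probability of hitting an uncovered cluster $C$ is proportional to $\sum_{x\in C} D(x)$, and given a hit the cost of $C$ drops to at most $4\OPT(C)$ by the standard $D^1$ lemma ($\E[\text{cost}]\le \sum_{x}\!\frac{D(x)}{\sum D}\sum_{x'}\min(D(x'),|x-x'|)\le 4\OPT(C)$ via the triangle inequality). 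The difficulty is that the joint product weight is not a sequential sampling process. I would handle this by a telescoping argument over the $k-1$ coordinates, each contributing an additive $4\OPT_k$-type term. This yields a loss of at most $4\OPT_k$ per non-anchor coordinate, and summing gives $4(k-1)\OPT_k$ (the anchor's $2\OPT$ is absorbed since the anchor lies in some cluster also counted). If the telescoping fails, the fallback is to bound the product weight directly by $\prod$ of distances to the anchor's nearest chosen point and compare with sequential $D^1$ sampling via a dominance/rearrangement inequality on the line, which is where one-dimensionality (clusters as intervals, gaps adding up) is essential. Finally, the case $k=3$ specializes to the factor $8$.
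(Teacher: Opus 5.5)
Your argument breaks down at the step you yourself call the main obstacle, and Steps 1--2 do not bridge it.

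\textbf{What Step 2 gives, and why it is not enough.} Your Step~2 observation is correct, and it can be sharpened. Fix the anchor and the other $k-2$ selected reports, and let $G'$ be their gap product. Inserting $y$ into a gap $(\ell,r)$ gives volume $G'(y-\ell)(r-y)/(r-\ell)$, and inserting $y$ outside gives $G'D(y)$. So the weight always lies between $\tfrac12 G'D(y)$ and $G'D(y)$. But this describes only the conditional law of one coordinate given the others. Those other coordinates are drawn jointly with $y$, not beforehand. You therefore cannot treat them as an earlier stage whose cost is already controlled and then apply the $D^1$ lemma to $y$.

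\textbf{The telescoping step is undefined.} The proposed telescoping, in which each coordinate contributes an additive $4\OPT_k$, is never defined, and there is no evident way to define it. Even for genuinely sequential $D^1$ seeding, the cost after an intermediate step is not bounded by any multiple of $\OPT_k$, because a still-uncovered cluster can be arbitrarily far away. This is why $k$-means++-type analyses need an induction over uncovered clusters and wasted picks, with harmonic-number factors, rather than a per-step additive charge.

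\textbf{The constant would not come out anyway.} The factor-$2$ slack between $(y-\ell)(r-y)/(r-\ell)$ and $\min\{y-\ell,r-y\}$ means a direct transfer of the $D^1$ lemma gives $8\OPT(C)$ per hit, not $4\OPT(C)$. So this route would not reach $4(k-1)$ even if the telescoping worked. The rearrangement fallback is also not stated.

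\textbf{The missing idea.} No sequential structure is needed. Condition on the anchor $a$ and regroup the cost-weighted sum over outcomes (selected set $S$, unserved agent $i$) by $U=S\cup\set{i}$. Each group is then the $k$ ways of leaving one member of $U$ out.

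\emph{Exact group total.} Take $k+1$ distinct reports $z_0<\cdots<z_k$, with the anchor at $z_q$, gaps $g_j$, and $G=\prod_j g_j$. Leaving out a non-anchor endpoint contributes exactly $G$. Leaving out an interior non-anchor $z_j$ contributes $G(1/g_j+1/g_{j+1})\min\{g_j,g_{j+1}\}=G(1+\rho_j)$ with $\rho_j\le 1$. Hence the group total is exactly $G\bigl(k+\sum_{j\ne q}\rho_j\bigr)$. This is at most $(2k-1)G$, and at most $(2k-2)G$ when the anchor is interior.

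\emph{Charging.} Charge each left-out agent $(2k-1)d(x_i,F^*)$ if it lies outside the anchor's optimal cluster $N_f$, and $(2k-2)\abs{x_i-x_a}$ if it lies inside. With the same volume weights, the charged total becomes $G\sum_h (b^a_{i_{h-1}}+b^a_{i_h})/g_h$. Since $k+1$ reports fall into at most $k$ contiguous optimal clusters, some adjacent pair shares a cluster. The triangle inequality, or for $N_f$ the anchor's neighbour (with a short extra check when the anchor is an endpoint), then makes the charged total at least the group total.

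\emph{Averaging.} Average over the uniform anchor, using $\sum_{a,j\in N_f}\abs{x_j-x_a}\le 2n_fC_f^*$ where $C_f^*$ and $n_f$ are the cost and size of $N_f$. This gives $(2k-1)+(2k-3)=4(k-1)$.
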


\subsection{Related work}

Our work belongs to approximate mechanism design without money
\cite{ProcacciaTennenholtz2013}, within the broader programs of algorithmic
mechanism design \cite{NisanRonen2001,NisanEtAl2007} and computational social
choice \cite{BrandtEtAl2016}.  For one facility, the single-peaked tradition
goes back to Black \cite{Black1948,Black1958}, while Moulin's generalized
median characterization is foundational for strategyproof social choice on a
line \cite{Moulin1980,Moulin1988}.  Later work studied randomized mechanisms
on networks \cite{AlonEtAl2010} and approximation--variance tradeoffs
\cite{ProcacciaWajcZhang2018,LuEtAl2010}.

For two facilities, the Proportional Mechanism of Lu et
al. is strategyproof in expectation and has approximation factor $4$ in every
metric space \cite{LuEtAl2010}; earlier work established lower bounds and
line-specific guarantees \cite{LuWangZhou2009}.  Lu et al. showed that their
sequential extension to three facilities is manipulable, and gave a different
truthful mechanism with approximation linear in $n$.

For $k\ge3$, Escoffier et al. studied the many-facility regime in which the
number of facilities is one less than the number of agents
\cite{EscoffierEtAl2011}.  Fotakis and Tzamos proved that no deterministic
\emph{anonymous} strategyproof mechanism has a bounded approximation ratio,
even with $k+1$ agents \cite{FotakisTzamos2014}.  Their Equal Cost mechanism works for all $k$
and $n$, may place facilities away from reports, is group-strategyproof for
concave distance costs, and has social-cost ratio at most $n$
\cite{FotakisTzamosConcave}.  In the special case $n=k+1$, their Pick the Loser
mechanism achieves factor $2$.  These results illustrate the force of the
unrestricted model considered here: we impose neither the relation $n=k+1$
nor a change in the agents' cost rule.

Winner-imposing mechanisms obtain factor $4k$ for every $k$
\cite{FotakisTzamosWinnerImposing}.  In that model, an agent selected to host a
facility must use that facility rather than the nearest open one, so the cost
model differs from ours.  Other positive results impose structural
restrictions, including perturbation stability and the case $n=k+1$
\cite{FotakisPatsilinakos2022,FotakisTzamosConcave}.

A complementary computational literature optimizes expected social cost under
an assumed distribution of locations.  Golowich, Narasimhan, and Parkes study
multi-facility generalized-median rules and data-driven almost-strategyproof
mechanisms with low ex-post regret in that setting \cite{GolowichEtAl2018}.  Our results
instead give exact incentive compatibility and worst-case guarantees for every
profile.  Nearby work also considers additive error
\cite{GolombTzamos2017}, capacities \cite{AuricchioWangZhang2025}, Bayesian
percentile mechanisms based on optimal transport \cite{AuricchioZhang2025},
and private acceptable-facility sets \cite{ShaEtAl2025}.

The adjacent-gap probability rule is closely related to the one-dimensional specialization of
fixed-size volume sampling, introduced for low-dimensional approximation by
Deshpande et al. and subsequently developed algorithmically
\cite{DeshpandeEtAl2006,DeshpandeRademacher2010}.  The identification of the
gap product with the Gram determinant of the segments from the anchor to the
selected reports makes this connection exact and explains the word ``Volume'' in the mechanism's
name.  Our welfare proof nevertheless works directly with gaps and does not
use linear-algebraic sampling machinery.  The new ingredients here are the
facility-location mechanism, the three-facility truthfulness proof, the direct
adjacent-gap welfare comparison, and the exact incentive failure from four
facilities onward.

For the many further variants of strategyproof facility location, we refer
to the survey of Chan et al.~\cite{ChanEtAl2021}.

\paragraph{Road map.} 

\Cref{sec:model} introduces the model, and \cref{sec:mechanism} defines
\RAVk{3} and its generalization to arbitrary k.
\Cref{sec:strategyproofness} proves that \RAVk{3} is strategyproof in expectation. 
\cref{sec:approximation} proves \cref{thm:intro-approximation}. Finally, \cref{sec:general-incentive-failure} constructs, for every $k\geq 4$, an instance in which $\RAV_k$ is manipulable, thereby completing the proof of \cref{thm:intro-boundary}.

\section{Preliminaries}\label{sec:model}

$N=\set{1,\ldots,n}$ be the set of agents. Each agent $i$ has a location 
$x_i\in\R$. We denote $\mathbf{x}=(x_1,...,x_n)$ as the location profile. 

A nonempty set $F\subseteq\R$ with $1\le\abs{F}\le k$ specifies the opened
facilities.    For a location $x_i\in\R$ and a nonempty facility set $F$, define the distance from
$x_i$ to its nearest open facility by
\[
  d(x_i,F):=\min_{f\in F}\abs{x_i-f}.
\]
Given a set $F$ of opened facilities, agent
$i$'s cost is
\[
  \cost(x_i,F):=d(x_i,F),
\]
and the \emph{social cost} is
\[
\SC(\mathbf{x},F):=\sum_{i=1}^{n}\cost(x_i,F)
           =\sum_{i=1}^{n}d(x_i,F).
\]
For a location profile $\mathbf{x}=(x_1,\ldots,x_n)$, the optimal $k$-facility social cost is
\[
\OPT_k(\mathbf{x}):=\min_{F\subseteq\R,\,1\le\abs{F}\le k}\SC(\mathbf{x},F).
\]

A \emph{randomized mechanism} maps the reports to a probability distribution
over facility sets.  For a fixed report profile $\mathbf{x}$, the notation
$M(\mathbf{x})$ denotes the resulting set-valued random variable.  Consequently,
$d(x_i,M(\mathbf{x}))$ is the random cost of a location $x_i$, while
$\SC(\mathbf{x},M(\mathbf{x}))$ is the random social cost at the  profile
$\mathbf{x}$.  Expectations of these quantities are taken over the mechanism's own random draws.

\begin{definition}[Strategyproofness in expectation]\label{def:sp}
A randomized mechanism $M$ is \emph{strategyproof in expectation} if, for
every agent $i$, every true location $t$, every alternative report $r$, and
every fixed list $\mathbf{x}_{-i}$ of the other reports,
\[
  \E\!\left[d\bigl(t,M(t,\mathbf{x}_{-i})\bigr)\right]
  \le
  \E\!\left[d\bigl(t,M(r,\mathbf{x}_{-i})\bigr)\right].
\]
\end{definition}

This property is also called \emph{truthfulness in expectation}.  It rules out
a profitable unilateral misreport before the random draw.  

A randomized $k$-facility mechanism is an $\alpha$-approximation for social cost if, for every profile $\mathbf{x}$,
\[
\E\!\left[\SC(\mathbf{x},M(\mathbf{x}))\right]\le \alpha\OPT_k(\mathbf{x}).
\]
Our main objective is to design a strategyproof-in-expectation mechanism that achieves a constant-factor approximation to the optimal social cost.

Throughout the paper, we assume without loss of generality that $k\leq n$. Otherwise, we can simply open a facility at each agent’s reported location, which is both optimal and strategyproof.

\section{The Random-Anchor Volume mechanism}\label{sec:mechanism}

We first describe the Random-Anchor Volume (\RAV) mechanism for $k=3$. 

\medskip

The mechanism first selects an agent uniformly at random and designates that agent’s reported location as the \emph{anchor}. It then randomly selects two additional reports, with each pair’s selection probability determined by how well the three resulting locations are separated along the line.

\bigskip

\noindent \textbf{The anchor-relative adjacent-gap weight}

\bigskip

Fix an anchor location \(p\) and two candidate
locations \(x,y \in \mathbb{R} \). Arrange the three locations in nondecreasing order,
\[
    z_0\le z_1\le z_2,
\]
and define their \emph{adjacent-gap weight} by
\begin{equation}\label{eq:delta-definition}
    \Delta_p(x,y):=(z_1-z_0)(z_2-z_1).
\end{equation}
Thus, the weight is the product of the two consecutive gaps among
\(p,x,y\). In particular, it is positive exactly when the three locations are
distinct, and it is larger when both gaps are large.

\begin{figure}[htbp]
\centering
\begin{tikzpicture}[x=1.05cm,y=1cm]
  \node[font=\small\bfseries\color{ravblue}] at (2.6,2.75) {Opposite sides of the anchor};
  \draw[thick] (0,1.4)--(5.2,1.4);
  \node[ravpoint,label=below:$x$] at (0.6,1.4) {};
  \node[ravanchor,label=below:$p$] at (2.6,1.4) {};
  \node[ravpoint,label=below:$y$] at (4.7,1.4) {};
  \draw[gapbrace] (0.6,1.75)--node[above=4pt] {$\abs{x-p}$}(2.6,1.75);
  \draw[gapbrace] (2.6,1.75)--node[above=4pt] {$\abs{y-p}$}(4.7,1.75);
  \node at (2.6,0.5) {$\Delta_p(x,y)=\abs{x-p}\cdot \abs{y-p}$};

  \node[font=\small\bfseries\color{ravblue}] at (8.4,2.75) {Same side of the anchor};
  \draw[thick] (5.8,1.4)--(11,1.4);
  \node[ravanchor,label=below:$p$] at (6.2,1.4) {};
  \node[ravpoint,label=below:$x$] at (8.2,1.4) {};
  \node[ravpoint,label=below:$y$] at (10.3,1.4) {};
  \draw[gapbrace] (6.2,1.75)--node[above=4pt] {$\abs{x-p}$}(8.2,1.75);
  \draw[gapbrace] (8.2,1.75)--node[above=4pt] {$\abs{x-y}$}(10.3,1.75);
  \node at (8.3,0.5) {$\Delta_p(x,y)=\abs{x-p}\cdot\abs{x-y}$};
\end{tikzpicture}
\end{figure}
\medskip 

\noindent \textbf{Description of $\RAV_3$.}
Given a reported profile \(\mathbf{x}\), the mechanism proceeds in two stages.

\begin{enumerate}
\item Choose an anchor agent \(I\) uniformly at random from \(N\), and open
a facility at the anchor's reported location \(x_I\).

\item Suppose that \(I=a\), and write \(p=x_a\). For each pair of
non-anchor agents \(\{j,j'\}\subseteq N\setminus\{a\}\), assign the weight
\[
    \Delta_p(x_j,x_{j'}).
\]
Let
\[
    D_a(\mathbf{x})
    :=
    \sum_{\{j,j'\}\in\binom{N\setminus\{a\}}{2}}
    \Delta_p(x_j,x_{j'})
\]
be the total weight of all such pairs.
\begin{itemize}
\item If \(D_a({\mathbf{x}})>0\), select a pair
\(\{j,j'\}\in\binom{N\setminus\{a\}}{2} \) with probability
\[
    \Pr\!\left( \mathcal{S}=\{j,j'\}\mid I=a\right)
    =
    \frac{\Delta_p({x}_j,{x}_{j'})}
         {D_a({\mathbf{x}})},
\]
and open the remaining two facilities at
\(x_j\) and \(x_{j'}\). 
\item If \(D_a(\mathbf{x})=0\), open a facility at every distinct reported
location. We call this the \emph{fallback rule}.
\end{itemize}
\end{enumerate}

In summary, \(\RAVk{3}\) chooses one report uniformly as an anchor and then
chooses the other two reports with probability proportional to the product of
the two gaps created by the resulting three locations.

Note that the fallback rule is invoked exactly when the reported profile has at most two
distinct coordinates, so opening at every distinct report uses at most two
facilities.

\begin{example}[A complete four-agent lottery]\label{ex:mechanism}
Consider four agents with reported profile
\[
\mathbf{x}=(0,1,4,7).
\]
Suppose first that agent \(1\), whose report is \(x_1=0\), is selected as the
anchor; thus, \(p=x_1=0\). The three candidate pairs of non-anchor agents are
\(\set{2,3}\), \(\set{2,4}\), and \(\set{3,4}\). Their adjacent-gap weights
are
\[
\Delta_p(x_2,x_3)=3,\qquad
\Delta_p(x_2,x_4)=6,\qquad
\Delta_p(x_3,x_4)=12.
\]
Since the total weight is \(21\), the corresponding conditional probabilities
are \(1/7\), \(2/7\), and \(4/7\), respectively.

Each of the four agents is selected as the anchor with probability \(1/4\).
Repeating the calculation for each possible anchor and combining the probabilities whenever different anchor choices produce the same facility set gives the following complete output lottery. Let \(\mathbf{F}\) denote the random facility set. Then
\[
\begin{aligned}
 \Pr(\mathbf{F}=\set{0,1,4})&=\frac{73}{672},
 &\Pr(\mathbf{F}=\set{0,1,7})&=\frac{53}{252},\\
 \Pr(\mathbf{F}=\set{0,4,7})&=\frac{191}{504},
 &\Pr(\mathbf{F}=\set{1,4,7})&=\frac{29}{96}.
\end{aligned}
\]
\end{example}

\subsection{\RAV mechanism for general \texorpdfstring{\(k\)}{k}}

We now describe the Random-Anchor Volume mechanism for an arbitrary number
\(k\) of facilities.
\medskip

The mechanism first selects an agent uniformly at random and designates that
agent's reported location as the \emph{anchor}. It then randomly selects
\(k-1\) additional reports, with each candidate set's selection probability
determined by how well the resulting \(k\) locations are separated along the
line.

\bigskip

\noindent\textbf{The anchor-relative adjacent-gap volume}

\bigskip

Fix an anchor location \(p\) and other \(k-1\) candidate locations \(\hat{x}_1,\ldots,\hat{x}_{k-1}\in\mathbb{R}\).
Arrange these \(k\) locations in nondecreasing order,
\[
    z_0\leq z_1\leq\cdots\leq z_{k-1},
\]
and define their \emph{adjacent-gap volume} by
\begin{equation}\label{eq:delta-k-definition}
    \Delta_p(\hat{x}_1,\ldots,\hat{x}_{k-1})
    :=
    \prod_{\ell=1}^{k-1}(z_\ell-z_{\ell-1}).
\end{equation}
Thus, the volume is the product of the \(k-1\) consecutive gaps among
\(p,\hat{x}_1,\ldots,\hat{x}_{k-1}\). In particular, it is positive exactly when the
\(k\) locations are pairwise distinct. It assigns greater weight to
configurations in which all consecutive gaps are large.

\medskip

For a set of agents
\(
    S=\{j_1,\ldots,j_{k-1}\},
\)
we use the abbreviation
\[
    \Delta_p(\mathbf{x}_S)
    :=
    \Delta_p(x_{j_1},\ldots,x_{j_{k-1}}).
\]
Note that the definition is independent of the ordering of the agents in \(S\).

\bigskip

\noindent\textbf{Description of \(\RAV_k\).}

\medskip

Given a reported profile \(\mathbf{x}\), the mechanism proceeds in two
stages.

\begin{enumerate}
\item Choose an anchor agent \(I\) uniformly at random from \(N\), and open
a facility at the anchor's reported location \(x_I\).

\item Suppose that \(I=a\), and write \(p=x_a\). For each set of
\(k-1\) non-anchor agents
\(S\in\binom{N\setminus\{a\}}{k-1}\),
assign the volume
\(
    \Delta_p(\mathbf{x}_S).
\)
Let
\[
    D_a^{(k)}(\mathbf{x})
    :=
    \sum_{S\in\binom{N\setminus\{a\}}{k-1}}
    \Delta_p(\mathbf{x}_S)
\]
be the total volume of all such candidate sets.

\begin{itemize}
\item If \(D_a^{(k)}(\mathbf{x})>0\), select a set
\(S\in\binom{N\setminus\{a\}}{k-1}\) with probability
\[
    \Pr\!\left(\mathcal{S}=S\mid I=a\right)
    =
    \frac{\Delta_p(\mathbf{x}_S)}
         {D_a^{(k)}(\mathbf{x})},
\]
and open the remaining \(k-1\) facilities at the reported locations
\[
    \{x_j:j\in S\}.
\]

\item If \(D_a^{(k)}(\mathbf{x})=0\), open a facility at every distinct
reported location. We call this the \emph{fallback rule}.
\end{itemize}
\end{enumerate}

Observe that \(D_a^{(k)}(\mathbf{x})=0\) exactly when the profile contains
fewer than \(k\) distinct reported locations. Thus, whenever the fallback
rule is invoked, its outcome is independent of the identity of the anchor.
If the mechanism is required to specify exactly \(k\) facility locations,
any remaining facilities may be colocated with facilities that have already
been opened.

\section{Strategyproofness of \RAVk{3}}\label{sec:strategyproofness}

\begin{theorem}\label{thm:rav3-sp}
The mechanism \(\RAVk{3}\) is strategyproof in expectation.
\end{theorem}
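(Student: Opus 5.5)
We fix a deviating agent \(i\) with true location \(t\), the other reports \(\mathbf{x}_{-i}\), and a misreport \(r\). The plan is to write \(i\)'s expected cost as an explicit function \(C(r):=\E[d(t,\RAVk{3}(r,\mathbf{x}_{-i}))]\) and show that it is minimized at \(r=t\). We split by the anchor's identity. With probability \(1/n\) agent \(i\) is the anchor, and then a facility sits at \(r\). With probability \((n-1)/n\) some \(a\neq i\) is the anchor. Conditional on anchor \(a\) with \(p=x_a\), the lottery over pairs either contains \(i\) or does not. Write \(A_a:=\sum_{\{j,j'\}\not\ni i}\Delta_p(x_j,x_{j'})\), which is independent of \(r\), and \(B_a(r):=\sum_{j\neq a,i}\Delta_p(r,x_j)\). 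The conditional expected cost is then
\[
\frac{\sum_{\{j,j'\}\not\ni i}\Delta_p(x_j,x_{j'})\,d(t,\{p,x_j,x_{j'}\})+\sum_{j}\Delta_p(r,x_j)\,d(t,\{p,r,x_j\})}{A_a+B_a(r)}.
\]
We treat the fallback profiles (at most two distinct reports) separately by direct inspection. They are degenerate, and we expect them to follow from the same inequalities by a limiting or continuity argument.

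The key structural observation is that, for fixed \(p,x_j\), the weight \(\Delta_p(r,x_j)\) is piecewise linear or quadratic in \(r\). It vanishes at \(r=p\) and \(r=x_j\), and it is a concave (tent-like) function between consecutive points. Mirroring the Proportional Mechanism proof of Lu et al., we will compare \(C(t)\) with \(C(r)\) pairwise by a domination argument: moving the report from \(t\) to \(r\) adds cost at least \(|t-r|\) in every realization where \(i\)'s own report is selected. Precisely, \(d(t,F\cup\{r\})\geq\min(d(t,F),|t-r|)\), while truthful inclusion gives cost \(0\). Misreporting therefore only helps through the reweighting of the other pairs. We will bound this reweighting benefit by showing, for each anchor \(a\), an inequality of the form
\[
B_a(t)\cdot \bar c_a \;\le\; \text{(loss from misreport)},
\]
where \(\bar c_a\) is the expected cost under the pairs not containing \(i\). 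This reduces to checking that \(\Delta_p(r,x_j)\) and \(\Delta_p(t,x_j)\) differ by at most \(|r-t|\) times a gap factor. The gap factor is the neighboring gap, using \(|\Delta_p(r,x_j)-\Delta_p(t,x_j)|\le |r-t|\cdot(\text{adjacent gap})\) via the Lipschitz property of each piece. We then need \(d(t,\{p,x_j,x_{j'}\})\) to be at most the relevant gap. This holds because \(t\) lies in, or adjacent to, one of the intervals formed by \(p,x_j,x_{j'}\), and its distance to the nearest of those points is at most the gap containing it.

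The main obstacle is that the Lipschitz bound alone is too lossy. A deviation can increase \(i\)'s weight with some partners and decrease it with others, and summing over all anchors \(a\) together with the \(1/n\) self-anchor term is needed. We would first try a case split on the position of \(t\) relative to \(p\) and the other reports: \(t\) outside the hull of the reports, or \(t\) between two consecutive reports \(x_{(m)}<t<x_{(m+1)}\). In each case we show \(C\) is nondecreasing in \(|r-t|\) on each side by differentiating \(C\) piecewise in \(r\). On each piece the numerator and denominator are polynomials of degree at most 2 in \(r\), so the sign of \(C'(r)\) is a polynomial inequality. We would verify it using that the gap product is exactly the volume, so total mass rearranges additively. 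If differentiation becomes messy, the fallback plan is an exchange argument: pair each \(j\) with the reports to the left and right of \(t\), and show term-by-term that the increase in cost outweighs the decrease in denominator mass. This last inequality, which has to handle three interacting points, is where the \(k\le3\) restriction must enter, and I expect it to be the hardest step.
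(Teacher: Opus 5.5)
Your setup (splitting by anchor and writing the conditional cost as a ratio) is right, but the proposal never proves the inequality that carries the theorem. The one concrete route you commit to cannot work: you plan to show that the expected cost is nondecreasing in $|r-t|$ on each side of $t$, and this is false. Put the anchor at $0$, $t=1$, and let the three other non-anchor agents report $-1$, $\tfrac32$ and $3$. Conditional on this anchor, the misreport $r=\tfrac52$ gives expected cost $\tfrac{25}{32}$, while $r=3$ gives $\tfrac34$. The reason is that as $r$ slides onto the report at $3$, the weight on the pairs $\{r,3\}$ and $\{r,-1\}$, which cost her $1$, drops from $\tfrac{15}{4}$ to $3$. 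Meanwhile the weight on $\{r,\tfrac32\}$, which costs her only $\tfrac12$, rises from $\tfrac32$ to $\tfrac94$, and the total weight stays fixed at $12$. Adding many agents at $0$ transfers this dip to the unconditional expected cost, since each of them becomes an anchor with the same conditional lottery. So only the comparison with $r=t$ itself holds, and a piecewise sign analysis of the derivative cannot close the argument. (Incidentally, the numerator is piecewise cubic, not quadratic, because $\Delta_p(r,x_j)$ is multiplied by the piecewise-linear $d(t,\{p,r,x_j\})$.)

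The Lipschitz domination fails for the reason you give, and its premise is also off. In an outcome containing your report the cost is exactly $\min\{d(t,F),|t-r|\}$, not at least $|t-r|$. It is small precisely when a partner lies near $t$, which is the delicate regime. Conversely, you do not need to sum over anchors. Because the anchor is uniform and independent of the reports, it suffices to prove that truth is optimal conditionally on every anchor, and that conditional statement is true.

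The missing idea is the structure that makes the fixed-anchor inequality tractable. Take the anchor at $0$ and $t>0$. Write $Z$ and $U(r)$ for your $A_a$ and $B_a(r)$, and $A$, $B(r)$ for the corresponding cost-weighted sums. Clearing denominators reduces $C_r\ge C_t$ to $B(r)\bigl(Z+U(t)\bigr)-A\bigl(U(r)-U(t)\bigr)\ge0$. After discarding $B(r)U(t)\ge0$, the remainder is a sum over (a distinguished other report paired with you, an unchanged pair). It regroups exactly into a sum, over triples of other reports, of a symmetric quantity $\Phi_r(y_1,y_2,y_3)$. This is where the three interacting points enter, and it suffices to show each triple is nonnegative. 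Order the triple so that $\alpha\le\beta\le\gamma$ are the costs $c(y_i)$.
\begin{itemize}
\item The case $c(r)\ge\beta$ follows from the four-point exchange inequality $\Delta_0(r,x_1)\Delta_0(x_2,x_3)\le\Delta_0(r,x_2)\Delta_0(x_1,x_3)+\Delta_0(r,x_3)\Delta_0(x_1,x_2)$. That inequality is proved by showing that $\Delta_0(x,y)/(|x||y|)$ is a metric on $\mathbb{R}\setminus\{0\}$.
\item The case $c(r)<\beta$ is treated along $r=t\pm s$. The piecewise cubic $s\mapsto\Phi_{t\pm s}$ is shown to be nondecreasing on $[0,\alpha]$ and minimized at an endpoint of $[\alpha,\beta]$. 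This uses concavity of its derivative together with endpoint slope bounds from the one-centre bound $2\Delta_0(x,y)\Delta_0(x,z)\ge x\rho\,\Delta_0(y,z)$, valid for $0\le\rho\le x$ and $y,z\notin(x-\rho,x+\rho)$.
\end{itemize}
Neither this decomposition nor these inequalities, nor any substitute for them, appears in your plan. As it stands, it is a list of strategies rather than a proof.
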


We first observe that, because the anchor is chosen uniformly and independently
of the reported locations, it suffices to establish the incentive inequality
separately for each possible anchor identity.

\begin{lemma}[Anchorwise reduction]
\label{lem:anchorwise-reduction}
Fix an agent \(i\), its true location \(t\), the reports of the other agents
\(\mathbf{x}_{-i}\), and an arbitrary alternative report \(r\). Suppose that,
for every possible anchor agent \(a\in N\),
\begin{equation}\label{eq:AnchorCond}
  \E\!\left[
    d\bigl(t,\RAVk{3}(t,\mathbf{x}_{-i})\bigr)
    \given I=a
  \right]
  \leq
  \E\!\left[
    d\bigl(t,\RAVk{3}(r,\mathbf{x}_{-i})\bigr)
    \given I=a
  \right],
\end{equation}
where the expectation is taken over the second-stage randomization, including
the fallback rule. Then
\[
  \E\!\left[
    d\bigl(t,\RAVk{3}(t,\mathbf{x}_{-i})\bigr)
  \right]
  \leq
  \E\!\left[
    d\bigl(t,\RAVk{3}(r,\mathbf{x}_{-i})\bigr)
  \right].
\]
Consequently, if~\eqref{eq:AnchorCond} holds for every choice of
\(i\), \(t\), \(\mathbf{x}_{-i}\), and \(r\), then \(\RAVk{3}\) is
strategyproof in expectation.
\end{lemma}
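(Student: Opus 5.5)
The proof is a direct application of the law of total expectation, conditioning on the anchor identity \(I\). The key observation is that the distribution of \(I\) does not depend on the reports. Under both the truthful profile \((t,\mathbf{x}_{-i})\) and the deviating profile \((r,\mathbf{x}_{-i})\), the first stage of \(\RAVk{3}\) draws \(I\) uniformly from \(N\). Hence \(\Pr(I=a)=1/n\) for every \(a\in N\) in both cases, and the anchor identity is a random variable whose law is the same across the two scenarios. Only the second-stage lottery, or the fallback outcome, depends on the reported locations. In particular, when \(a=i\) the anchor location itself changes from \(t\) to \(r\), but this is absorbed into the conditional expectation.

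The steps are as follows. First, write
\[
  \E\!\left[d\bigl(t,\RAVk{3}(t,\mathbf{x}_{-i})\bigr)\right]
  =\sum_{a\in N}\frac1n\,
  \E\!\left[d\bigl(t,\RAVk{3}(t,\mathbf{x}_{-i})\bigr)\given I=a\right],
\]
and write the analogous decomposition for the report \(r\). Each conditional expectation is a finite average: it is either a finite sum over pairs in \(\binom{N\setminus\{a\}}{2}\) weighted by \(\Delta_p/D_a\), or a single deterministic fallback value. Second, apply hypothesis~\eqref{eq:AnchorCond} termwise for every \(a\), multiply by the common nonnegative weight \(1/n\), and sum to obtain the unconditional inequality. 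Third, for the ``consequently'' clause, note that if the hypothesis holds for all \(i,t,\mathbf{x}_{-i},r\), then the conclusion holds for all of them. This is verbatim \cref{def:sp}.

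There is no real obstacle here. The only point requiring care is that the conditioning event \(\{I=a\}\) has the same probability \(1/n\) under both profiles, so the weights match across the two sums. The fallback case also needs a brief remark. Whether the fallback is triggered may depend on the report, since it is triggered when the profile has at most two distinct coordinates. The fallback output is deterministic given the profile, so it is still covered by the conditional expectation in~\eqref{eq:AnchorCond}, and nothing extra is needed.
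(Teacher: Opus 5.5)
Your proposal is correct and follows the same route as the paper's proof. In both, the anchor is uniform with $\Pr(I=a)=1/n$ under either report, both expectations are expanded by the law of total expectation, and \eqref{eq:AnchorCond} is applied termwise; your additional remarks about the case $a=i$ and the fallback rule are accurate.
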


\begin{proof}
The first stage selects the anchor agent uniformly and independently of the
reported locations. Hence, under both the truthful report \(t\) and the
alternative report \(r\), we have
\[
  \Pr(I=a)=\frac{1}{n}
  \qquad\text{for every }a\in N.
\]
By the law of total expectation,
\[
\begin{aligned}
  \E\!\left[
    d\bigl(t,\RAVk{3}(t,\mathbf{x}_{-i})\bigr)
  \right]
  &=
  \frac{1}{n}\sum_{a\in N}
  \E\!\left[
    d\bigl(t,\RAVk{3}(t,\mathbf{x}_{-i})\bigr)
    \given I=a
  \right] \\
  &\leq
  \frac{1}{n}\sum_{a\in N}
  \E\!\left[
    d\bigl(t,\RAVk{3}(r,\mathbf{x}_{-i})\bigr)
    \given I=a
  \right] \\
  &=
  \E\!\left[
    d\bigl(t,\RAVk{3}(r,\mathbf{x}_{-i})\bigr)
  \right],
\end{aligned}
\]
where the inequality follows from~\eqref{eq:AnchorCond}.
\end{proof}
In words, Lemma~\ref{lem:anchorwise-reduction} shows that it suffices to prove
that \(\RAVk{3}\) is truthful in expectation conditional on each possible
anchor identity.

\subsection{Truthfulness in expectation for a fixed anchor}
\label{sec:fixed-anchor}

In this section, we verify that \(\RAV_3\) is truthful in expectation
conditional on each possible anchor identity. Specifically, we prove that
\Cref{eq:AnchorCond} holds for every choice of \(i\), \(t\), \(\mathbf{x}_{-i}\), and \(r\). By \Cref{lem:anchorwise-reduction}, this establishes
\Cref{thm:rav3-sp}.

Fix an arbitrary agent with true location \(t\) and an arbitrary
alternative report \(r\). If this agent is selected as the anchor, then under
truthful reporting a facility is opened at \(t\), so their cost is zero.
Hence, \Cref{eq:AnchorCond} holds immediately in this case.

It therefore remains to consider the case in which some other agent is selected
as the anchor. Let \(p\) denote the anchor's reported location. If \(p=t\), then
the deviating agent is already served by the facility at the anchor, regardless
of whether it reports \(t\) or \(r\). Thus, \Cref{eq:AnchorCond} again holds
trivially.

Suppose, therefore, that \(p\neq t\). By translating and, if necessary,
reflecting the line, we may assume without loss of generality that \(p=0\) and
\(t>0\). We continue to use the same notation for the transformed locations.
Let
\(
    (y_1,\ldots,y_{n-2})
\)
denote the reports of all agents other than the anchor and the deviating agent.

For any coordinate \(x\in \mathbb{R}\), denote
\(
 c(x):=d(t,\set{0,x})=\min\{t,\abs{t-x}\}.
\)
Thus \(c(x)\) is the deviator's cost when the anchor and \(x\) are open.
All indices in the sums below range over \(\{1,\ldots,n-2\}\), i.e., the reports of all agents other than the anchor and the deviating agents.

We first express the deviating agent's expected cost under truthful report \(t\)
and under an alternative report \(r\). Define
\begin{align*}
    Z
    &:= \sum_{i<j}\Delta_0(y_i,y_j),&
    A
    &:= \sum_{i<j}\Delta_0(y_i,y_j)
        \min\{c(y_i),c(y_j)\},\\
    U(z)
    &:= \sum_i\Delta_0(z,y_i),&
    B(r)
    &:= \sum_i\Delta_0(r,y_i)
        \min\{c(r),c(y_i)\}.
\end{align*}
Here, \(Z\) is the total weight of the candidate pairs that do not contain the
deviating agent, whereas \(U(z)\) is the total weight of the candidate pairs
that contain the deviating agent when it reports \(z\).

Whenever the corresponding total weight is positive, the conditional expected
costs under the truthful report \(t\) and the alternative report \(r\) are,
respectively,
\[
    C_t=\frac{A}{Z+U(t)}
    \qquad\text{and}\qquad
    C_r=\frac{A+B(r)}{Z+U(r)}.
\]

If \(Z+U(t)=0\), then the fallback rule opens a facility at every distinct
reported location, including \(t\). Thus, the deviating agent incurs zero cost
under truthful reporting, and \Cref{eq:AnchorCond} holds immediately.

Now suppose that \(Z+U(r)=0\). Then the profile induced by the alternative
report \(r\) contains at most two distinct reported locations. Replacing \(r\)
by the truthful report \(t\) therefore produces a profile with at most three
distinct reported locations. If there are fewer than three distinct locations,
the fallback rule opens a facility at \(t\). If there are exactly three, every
pair of non-anchor reports with positive weight contains one report at each of
the two locations distinct from the anchor, one of which is \(t\). Hence, every
possible outcome under truthful reporting again opens a facility at \(t\).
The deviating agent therefore incurs zero cost under truthful reporting, so
\Cref{eq:AnchorCond} holds in this case as well. Thus, we may assume that both $Z+U(t)>0$ and $Z+U(r)>0$. 

The difference between the two conditional expected costs, namely, the cost
under the alternative report minus the cost under truthful reporting, is
\begin{align*}
    C_r-C_t
    &= \frac{A+B(r)}{Z+U(r)}
       -\frac{A}{Z+U(t)} \\
    &= \frac{
        B(r)\bigl(Z+U(t)\bigr)
        -A\bigl(U(r)-U(t)\bigr)}
       {(Z+U(r))(Z+U(t))}.
\end{align*}
By the preceding discussion, the denominator is positive. Therefore,
\(C_r\geq C_t\) if and only if the numerator is nonnegative, that is,
\begin{equation}\label{eq:fixed-anchor-numerator}
    B(r)\bigl(Z+U(t)\bigr)
    -A\bigl(U(r)-U(t)\bigr)
    \geq 0.
\end{equation}
Hence, to establish \Cref{eq:AnchorCond}, it suffices to prove
\Cref{eq:fixed-anchor-numerator}.

\subsubsection{Grouping the numerator into triples}

Remove for the moment the visibly nonnegative term \(B(r)U(t)\).  Expanding
the rest of the left-hand side of \eqref{eq:fixed-anchor-numerator} gives
\begin{align}
 &B(r)Z-A\bigl(U(r)-U(t)\bigr)\notag\\
 &\quad=\sum_i\sum_{j<\ell}\Delta_0(y_j,y_\ell)
 \Bigl[\Delta_0(r,y_i)\min\{c(r),c(y_i)\}
 -\bigl(\Delta_0(r,y_i)-\Delta_0(t,y_i)\bigr)
   \min\{c(y_j),c(y_\ell)\}\Bigr].
 \label{eq:ungrouped-incentive}
\end{align}
Here \(i\) may equal \(j\) or \(\ell\); only the two indices in the unchanged
pair must be distinct.  Each summand is therefore indexed by a distinguished
identity \(i\) and an unchanged pair \(\{j,\ell\}\).  When all three identities
are distinct, their three choices of distinguished identity form one
three-term cyclic block.  Accordingly
define
\begin{equation}\label{eq:cyclic-contribution}
\begin{aligned}
 \Phi_r(x,y,z):={}&\Delta_0(y,z)\Bigl[\Delta_0(r,x)\min\{c(r),c(x)\}
 -\bigl(\Delta_0(r,x)-\Delta_0(t,x)\bigr)\min\{c(y),c(z)\}\Bigr]\\
 &+\Delta_0(z,x)\Bigl[\Delta_0(r,y)\min\{c(r),c(y)\}
 -\bigl(\Delta_0(r,y)-\Delta_0(t,y)\bigr)\min\{c(z),c(x)\}\Bigr]\\
 &+\Delta_0(x,y)\Bigl[\Delta_0(r,z)\min\{c(r),c(z)\}
 -\bigl(\Delta_0(r,z)-\Delta_0(t,z)\bigr)\min\{c(x),c(y)\}\Bigr].
\end{aligned}
\end{equation}
Each of the three terms chooses one coordinate as the distinguished report;
the other two form the unchanged pair.  Since \(\Delta_0\) and the minimum are
symmetric in that pair, \(\Phi_r\) is symmetric in \(x,y,z\).

If the distinguished identity belongs to the unchanged pair, only two
identities are involved.  In \(\Phi_r(y_i,y_i,y_j)\), and similarly in
\(\Phi_r(y_i,y_j,y_j)\), the relevant summand appears twice while the third
summand vanishes because \(\Delta_0(y_i,y_i)=0\).  This explains the factor
\(1/2\) below.  Thus \eqref{eq:ungrouped-incentive} gives the exact
decomposition
\begin{align}
 &B(r)\bigl(Z+U(t)\bigr)-A\bigl(U(r)-U(t)\bigr)\notag\\
 &\quad=B(r)U(t)+\sum_{i<j<\ell}\Phi_r(y_i,y_j,y_\ell)
 +\frac12\sum_{i<j}
 \bigl[\Phi_r(y_i,y_i,y_j)+\Phi_r(y_i,y_j,y_j)\bigr].
 \label{eq:orbit-decomposition}
\end{align}

In each cyclic block, the distinguished report is paired with the deviator,
so \(\Delta_0(t,x)\) changes to \(\Delta_0(r,x)\); the other two reports form
the unchanged pair.  A negative contribution from one distinguished role is
compensated by the other two roles---hence the name.

\begin{lemma}[Cyclic compensation]\label{lem:cyclic-compensation}
Fix the \(t>0\) above and \(x,y,z\in\mathbb R\), with repeated coordinates
allowed.  Then
\[
 \Phi_r(x,y,z)\ge0
 \qquad\text{for every }r\in\mathbb R.
\]
\end{lemma}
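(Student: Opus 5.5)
The plan is to prove $\Phi_r(x,y,z)\ge 0$ by a case analysis driven by the geometry of $r$ relative to the anchor $0$, the true location $t>0$, and the three coordinates $x,y,z$. By the symmetry of $\Phi_r$ we may sort $x\le y\le z$. It is useful to split each of the three terms into two parts. The first is the always-nonnegative part $\Delta_0(\cdot,\cdot)\Delta_0(r,x)\min\{c(r),c(x)\}$. The second is the signed part $-\Delta_0(y,z)\bigl(\Delta_0(r,x)-\Delta_0(t,x)\bigr)\min\{c(y),c(z)\}$. A term can be negative only when moving the deviator's report from $t$ to $r$ \emph{increases} its pairing weight with the distinguished report. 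We then need to show that, in each such case, the gain term of some other role (or of the same role) pays for the loss.

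First, we dispose of easy regimes. If $\Delta_0(r,w)\le\Delta_0(t,w)$ for all three $w\in\{x,y,z\}$, every term is nonnegative and we are done. Next, we use the bound $\min\{c(y),c(z)\}\le t$ and, more sharply, $c(w)\le |t-w|$. These let us compare the loss factor $\min\{c(y),c(z)\}$ with the gain factor $\min\{c(r),c(x)\}$. The key inequality to aim for is a per-coordinate estimate of the form
\[
\bigl(\Delta_0(r,w)-\Delta_0(t,w)\bigr)\,c(w')\;\le\;\Delta_0(r,w)\min\{c(r),c(w)\}+\text{(contribution from the other roles)}.
\]
This should follow from the triangle-type identity $|\Delta_0(r,w)-\Delta_0(t,w)|\le |r-t|\cdot(\text{gap})$ together with $c(r)\ge$ something like $\min\{t,|r-t|\}$. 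We therefore split into the cases $|r-t|\le t$, where $c(r)=|r-t|$, and $|r-t|>t$, where $c(r)=t$, so the gain factor is as large as any loss factor.

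Within these cases we sub-split by the position of $r$: $r<0$, $0\le r\le t$, or $r>t$, and by how many of $x,y,z$ lie on each side of $0$ and of $t$. On each side of the anchor, $\Delta_0(\cdot,\cdot)$ is an explicit product of absolute values, so each case reduces to a polynomial inequality in ordered variables. We would verify it by grouping the negative monomials against positive ones: the cyclic structure pairs each $\Delta_0(y,z)\Delta_0(t,x)$-type term with a $\Delta_0(z,x)\Delta_0(r,y)$-type term. The degenerate cases with repeated coordinates are handled by continuity or directly, since some $\Delta_0$ vanish.

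The main obstacle is the regime where $r$ is pushed far from $t$ on the same side as some of $x,y,z$, for instance $r>z>t$. There $\Delta_0(r,w)$ grows for all three roles simultaneously, so no single role is obviously compensated. My first attempt there would be to show the sum is monotone nondecreasing in $r$ beyond the relevant breakpoint: the coefficient of $r$ in $\Phi_r$ is linear on each interval between breakpoints. It would then suffice to check nonnegativity at the finitely many breakpoints $r\in\{0,t,x,y,z,2t\}$, plus the sign of the slope at infinity. This reduces the lemma to checking finitely many explicit configurations.
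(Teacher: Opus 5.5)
Your proposal is not yet a proof. The ``key inequality'' with its unspecified ``contribution from the other roles'', and the plan to group negative monomials against positive ones, leave the content of the lemma open. The one concrete reduction you give is false: you claim $\Phi_r$ is linear in $r$ between the breakpoints $\{0,t,x,y,z,2t\}$, so that checking those points suffices. For $\abs{r-t}<t$ you have $c(r)=\abs{r-t}$, and for $r$ between $0$ and a positive coordinate $w$ you have $\Delta_0(r,w)=r(w-r)$. Hence a gain term such as $\Delta_0(r,w)\min\{c(r),c(w)\}=r(w-r)\abs{t-r}$ is cubic in $r$. Moreover, $\min\{c(r),c(w)\}$ switches at $r=t\pm c(w)$, and these points are not in your list. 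On this region $\Phi_r$ is piecewise cubic and can be convex. For instance, with $t=1$, coordinates $0.9,1.5,3$ and $r=1+s$, one gets $\Phi_r=1.2555s+1.863s^2-2.43s^3$ for $0\le s\le 0.1$, which is convex there. So values at breakpoints do not control the interior. Also, each ``breakpoint check'' is still an inequality in the free parameters $t,x,y,z$, not a finite computation.

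You have also located the difficulty in the wrong place. Sort the reports by service cost, $\alpha=c(y_1)\le\beta=c(y_2)\le\gamma=c(y_3)$; this ordering, not the coordinate order, governs the signs.

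\emph{The case $c(r)\ge\beta$ is the easy one.} It includes every $r$ with $\abs{r-t}\ge\beta$, so in particular your ``$r$ far from $t$'' regime. After collecting terms, the only signed part is $(\beta-\alpha)\bigl[\Delta_0(y_1,y_3)\Delta_0(r,y_2)+\Delta_0(y_1,y_2)\Delta_0(r,y_3)-\Delta_0(y_2,y_3)\Delta_0(r,y_1)\bigr]$. This is nonnegative pointwise in $r$ by the four-point exchange inequality, part~(ii) of \cref{lem:delta0-geometry}, which your plan lacks.

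\emph{The hard case is $c(r)<\beta$,} that is, $r=t\pm s$ with $s<\beta$. Since $\Phi_t=0$, nonnegativity there hinges on the one-sided slope at $s=0^+$, and the per-role estimate you aim for fails. Take $y_1=t-\alpha$, $r=t+s$ and $\alpha<\beta$. The $y_1$-role then contributes $\Delta_0(y_2,y_3)\,y_1(\alpha-\beta)s<0$ at first order; in the example above this is $-0.81s$. This loss must be paid for by the other two roles.

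The paper does this as follows:
\begin{itemize}
\item It uses concavity of $s\mapsto\Delta_0(t\pm s,y_i)$ together with the one-centre product bound $2\Delta_0(y_1,y_2)\Delta_0(y_1,y_3)\ge(\beta-\alpha)\,y_1\,\Delta_0(y_2,y_3)$ (part~(iii)) to show the slope at $s=0^+$ is nonnegative.
\item It proves analogous slope bounds at $s=\alpha$, using part~(iii) twice and AM--GM.
\item It uses that the third derivative in $s$ is nonpositive on each branch, so the derivative is concave. This gives monotonicity on $[0,\alpha]$ and an endpoint minimum on $[\alpha,\beta]$.
\end{itemize}
Nothing in your proposal supplies these slope estimates or a substitute for them.
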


Because \(B(r),U(t)\ge0\), the lemma and \eqref{eq:orbit-decomposition}
imply \Cref{eq:fixed-anchor-numerator}.  We now prove the lemma.

\subsubsection{Three properties of anchor-relative weight}

The compensation argument uses the following three geometric facts about
\(\Delta_0\).

\begin{lemma}[Geometry of \(\Delta_0\)]\label{lem:delta0-geometry}
\begin{enumerate}[label=\textup{(\roman*)}]
\item \emph{Piecewise shape.} For fixed \(y>0\),
\[
 \Delta_0(x,y)=
 \begin{cases}
  (-x)y,&x\le0,\\
  x(y-x),&0\le x\le y,\\
  y(x-y),&x\ge y.
 \end{cases}
\]
For \(y<0\), reflect the picture along the $y$-axis; for \(y=0\), the function is zero.
Consequently, \(x\mapsto\Delta_0(x,y)\) is concave on every interval
containing neither \(0\) nor \(y\) in its interior.

\item \emph{Four-point exchange.} For arbitrary \(r,x_1,x_2,x_3\in\R\),
\[
 \Delta_0(r,x_1)\Delta_0(x_2,x_3)
 \le
 \Delta_0(r,x_2)\Delta_0(x_1,x_3)
 +\Delta_0(r,x_3)\Delta_0(x_1,x_2).
\]

\item \emph{One-centre product bound.} If \(x>0\), \(0\le\rho\le x\), and
\(y,z\notin(x-\rho,x+\rho)\), then
\[
 2\Delta_0(x,y)\Delta_0(x,z)
 \ge x\rho\,\Delta_0(y,z).
\]
\end{enumerate}
\end{lemma}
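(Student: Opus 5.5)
The three parts are independent, so I would prove them separately, relying on the fact that \(\Delta_0\) is a product of the gaps of the sorted triple \(\{0,\cdot,\cdot\}\).

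\emph{Part (i).} Fix \(y>0\) and split according to where \(x\) falls among the three orderings of \(\{0,x,y\}\).
\begin{itemize}
\item If \(x\le 0\), the sorted triple is \(x\le 0\le y\), so the gaps are \(-x\) and \(y\).
\item If \(0\le x\le y\), the gaps are \(x\) and \(y-x\).
\item If \(x\ge y\), the gaps are \(y\) and \(x-y\).
\end{itemize}
The case \(y<0\) follows from the reflection \(x\mapsto -x\), which preserves all gaps. The case \(y=0\) gives a zero gap. For concavity, an interval whose interior avoids \(0\) and \(y\) lies inside one of the three closed pieces. On that piece the function is either affine or the concave quadratic \(x(y-x)\).

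\emph{Part (ii).} I would first record a uniform formula: \(\Delta_0(u,v)=|u|\,|v|\) when \(u,v\) lie on opposite sides of \(0\), and \(\Delta_0(u,v)=m\,(M-m)\) when they lie on the same side, where \(m,M\) are the smaller and larger of \(|u|,|v|\). Equivalently, \(\Delta_0(u,v)\) is the Gram determinant of the signed indicators of the segments \([0,u]\) and \([0,v]\).

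Both sides of the inequality are invariant under reflection, so I would assume \(r\ge 0\). I would then split by which of \(x_1,x_2,x_3\) lie on the same side as \(r\), and by the order of the absolute values within each side. The inequality is symmetric in \(x_2,x_3\), which cuts down the cases. In each case both sides are explicit polynomials in the gap lengths between consecutive points among \(0,r,x_1,x_2,x_3\). After substituting these nonnegative gap variables, I expect the right-hand side minus the left-hand side to expand into a polynomial with nonnegative coefficients. This is a Ptolemy-type identity for points on a line: for collinear points the Ptolemy inequality is an equality, so the slack should come only from the correction terms.

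Two degenerate configurations are immediate and should be dispatched first. If \(x_1\) coincides with \(0\) or \(r\), the left-hand side vanishes. If \(x_2=x_3\), the inequality reduces to \(0\le\) nonnegative terms.

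This enumeration is where I expect the main obstacle. The work is organising the cases so that each reduces to a short sign check, rather than grinding through all orderings of five points. If the case count becomes unwieldy, my fallback is the Gram-determinant viewpoint. The three-term Plücker identity among \(2\times2\) minors, combined with Cauchy--Schwarz, may give the inequality uniformly.

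\emph{Part (iii).} I would reduce to the case \(y\le z\), since the statement is symmetric in \(y,z\). Then I would split by where \(y\) and \(z\) lie relative to the forbidden window \((x-\rho,x+\rho)\subseteq[0,2x]\): both left of it, both right of it, or on opposite sides. In each subcase I would use part (i) to bound \(\Delta_0(x,y)\) from below. For example:
\begin{itemize}
\item if \(y\ge x+\rho\), then \(\Delta_0(x,y)=x(y-x)\ge x\rho\);
\item if \(0\le y\le x-\rho\), then \(\Delta_0(x,y)=y(x-y)\ge y\rho\);
\item if \(y<0\), then \(\Delta_0(x,y)=|y|\,x\).
\end{itemize}
I would then compare each lower bound with the explicit value of \(\Delta_0(y,z)\) from part (i). Typically one factor of \(\Delta_0(x,\cdot)\) dominates \(x\rho\) (up to the factor \(2\)) and the other dominates a gap of \(\Delta_0(y,z)\). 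The factor \(2\) is needed exactly when \(y,z\) are on the same side of \(x\): there one gap of \(\Delta_0(y,z)\) may be as large as \(|y-x|+|z-x|\le 2\max\{|y-x|,|z-x|\}\), which is where the constant enters.
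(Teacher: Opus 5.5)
Part (i) is fine, and your case splits for (ii) and (iii) are essentially the paper's. However, the step you propose for closing the cases of (iii) fails. Bounding one factor by (a multiple of) $x\rho$ and the other by $\Delta_0(y,z)$ is too lossy whenever a point lies right of the window, far from $x$. For $x=1$, $\rho=1/100$, $y=50$, $z=100$ one has $\Delta_0(x,y)=49$, $\Delta_0(x,z)=99$ and $\Delta_0(y,z)=2500$. Replacing either factor by $x\rho$ then leaves you needing $198\ge 2500$ or $98\ge 2500$.

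The bound that works on the right mirrors your left-side bound $\Delta_0(x,y)\ge\rho y$. For $y\ge x+\rho$ one has $2\Delta_0(x,y)=2x(y-x)\ge\rho y$: the difference $y(2x-\rho)-2x^2$ is increasing in $y$ and equals $\rho(x-\rho)\ge 0$ at $y=x+\rho$. In the both-right case (with $y\le z$), $\Delta_0(x,z)=x(z-x)\ge x(z-y)$ then finishes, since $\Delta_0(y,z)=y(z-y)$.

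Your account of the constant $2$ is also backwards. The gap $\abs{z-y}$ equals $\abs{y-x}+\abs{z-x}$ only when $y,z$ lie on \emph{opposite} sides of $x$, and that is where the $2$ is sharp: $x=2$, $\rho=1$, $y=1$, $z=3$ gives $4=4$. It is also asymptotically sharp for $\rho=x$, $y=2x$, $z\to\infty$. When both points lie left of the window, the inequality already holds with constant $1$.

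In (ii), the nonnegative-coefficient expansion you anticipate does hold, but it is the whole content and the proposal does not carry it out. It becomes short once you note what the statement says over all choices of $(r,x_1,x_2,x_3)$: for four points, each of the three products $\Delta_0(\cdot,\cdot)\Delta_0(\cdot,\cdot)$ over perfect matchings is at most the sum of the other two. So only the split of the points between the two rays and their radial order matter. Write radii as sums of consecutive gaps. For four points on one ray (radii $a,a+b,a+b+c,a+b+c+d$) the binding slack is $abcd$. For three on one ray and one at distance $e$ on the other it is $e\,abc$. The two-and-two case is immediate.

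Drop the Gram/Pl\"ucker fallback. For general vectors the squared inequality is false: in $\R^2$, $a=(1,0)$, $b=(0,1)$, $c=(1,1)$, $d=(1,-1)$ give $4\le 2$. So it could only work by exploiting the special structure of segment indicators.

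The paper's device, which also supplies the right bookkeeping for (iii), is to divide by the radii. The normalized quantity $\eta(x,y)=\Delta_0(x,y)/(\abs{x}\abs{y})$ is a metric on $\R\setminus\set{0}$; on one ray $1-\eta$ is multiplicative. This makes the three-and-one case of (ii) the triangle inequality. It also turns (iii) into $2uv\ge\lambda\,\eta(y,z)$, with $\lambda=\rho/x$, $u=\eta(x,y)$ and $v=\eta(x,z)$. Points left of the window have $\eta(x,\cdot)\ge\lambda$, and points right of it have $\eta(x,\cdot)\ge\lambda/(1+\lambda)$. In the opposite-side case $\eta(y,z)=u+v-uv$.
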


\begin{proof}
Part~(i) follows by sorting \(0,x,y\) in the definition of \(\Delta_0\).
Between its zeros \(0\) and \(y\), the graph is a concave quadratic arch;
outside that segment it is linear.

For the other two parts, represent a nonzero coordinate by the rooted segment
from \(0\) to that coordinate, with the two rays treated as separate arms.
For nonzero \(x,y\), define the normalized separation
\[
 \eta(x,y):=\frac{\Delta_0(x,y)}{\abs{x}\abs{y}}
 =
 \begin{cases}
  1,
    &x,y\text{ lie on opposite rays},\\[2mm]
  1-\dfrac{\min\{\abs{x},\abs{y}\}}
           {\max\{\abs{x},\abs{y}\}},
    &x,y\text{ lie on the same ray}.
 \end{cases}
\]
This is the fraction of the longer segment not shared by the shorter one.
It is a metric on \(\R\setminus\set{0}\).  Indeed, the only nontrivial
triangle has all three points on one ray.  If
\(0<\abs{x}\le\abs{y}\le\abs{z}\), then
\[
 1-\eta(x,z)
 =\bigl(1-\eta(x,y)\bigr)
  \bigl(1-\eta(y,z)\bigr),
\]
which implies \(\eta(x,z)\le\eta(x,y)+\eta(y,z)\).  A triangle involving
both rays has two sides of length \(1\) and a third of length at most \(1\),
so its inequalities are immediate.

For part~(ii), if one of the four points is \(0\), all three pairing products
vanish.  Otherwise divide by the product of their four radii.  The desired
claim becomes the corresponding inequality for products of \(\eta\)-values.
If two points lie on each ray, the two crossed pairing products equal \(1\),
whereas the same-ray product is at most \(1\).  If three points lie on one ray
and the fourth on the other, the three products reduce to the three normalized
distances among the same-ray points, and the claim follows from the triangle
inequality.

It remains to put all four points on one ray.  After reflection, write their
ordered radii as
\[
 0<u_1\le u_2\le u_3\le u_4.
\]
The pairing \((u_1,u_3),(u_2,u_4)\) dominates
\((u_1,u_2),(u_3,u_4)\) termwise.  It also dominates the third pairing,
because
\[
 \eta(u_1,u_3)\eta(u_2,u_4)
 -\eta(u_1,u_4)\eta(u_2,u_3)
 =\frac{(u_2-u_1)(u_4-u_3)}{u_3u_4}\ge0.
\]
Finally,
\[
\begin{aligned}
 &\eta(u_1,u_2)\eta(u_3,u_4)
 +\eta(u_1,u_4)\eta(u_2,u_3)\\
 &\qquad-\eta(u_1,u_3)\eta(u_2,u_4)
 =\frac{(u_2-u_1)(u_3-u_2)(u_4-u_3)}
        {u_2u_3u_4}\ge0.
\end{aligned}
\]
Thus the largest pairing product is at most the other two combined.
Multiplying back by the four radii proves part~(ii).

For part~(iii), the cases \(\rho=0\), \(y=0\), or \(z=0\) are immediate.
Otherwise put
\[
 \lambda:=\frac{\rho}{x},\qquad
 u:=\eta(x,y),\qquad v:=\eta(x,z),
\]
and assume \(y\le z\).  Dividing the desired inequality by
\(x^2\abs{y}\abs{z}\) reduces it to
\[
 2uv\ge\lambda\eta(y,z).
\]
If both \(y,z\) lie to the left of the excluded interval, then
\(u,v\ge\lambda\).  The triangle inequality gives
\(\eta(y,z)\le u+v\), and
\[
 2uv-\lambda(u+v)
 =u(v-\lambda)+v(u-\lambda)\ge0.
\]
If \(y\le x-\rho\) and \(z\ge x+\rho\), then
\[
 u\ge\lambda,\qquad v\ge\frac{\lambda}{1+\lambda},
 \qquad \eta(y,z)=u+v-uv.
\]
The expression \(2uv-\lambda(u+v-uv)\) is increasing in \(u\), because its
derivative is
\((2+\lambda)v-\lambda\ge\lambda/(1+\lambda)\).  At \(u=\lambda\)
it equals
\[
 \lambda\bigl((1+\lambda)v-\lambda\bigr)\ge0.
\]
Finally, if both points lie to the right, then
\[
 u\ge\frac{\lambda}{1+\lambda}\ge\frac{\lambda}{2},
 \qquad \eta(y,z)\le v,
\]
so \(2uv\ge\lambda v\ge\lambda\eta(y,z)\).  This proves part~(iii).
\end{proof}

\subsubsection{Proof of \cref{lem:cyclic-compensation}}

The definition of \(\Phi_r\) is symmetric, so relabel its three arguments as
\(y_1,y_2,y_3\) such that
\[
 \alpha:=c(y_1)\le\beta:=c(y_2)\le\gamma:=c(y_3).
\]
The unchanged pair \(\{y_2,y_3\}\) serves the deviator at cost \(\beta\),
the other two unchanged pairs serve her at cost \(\alpha\), and a pair
containing \(r,y_i\) serves her at cost \(\min\{c(r),c(y_i)\}\).  Collecting
these three roles in \eqref{eq:cyclic-contribution} gives
\begin{equation}\label{eq:triple-collected}
\begin{aligned}
 \Phi_r(y_1,y_2,y_3)={}
 &\Delta_0(y_2,y_3)\Bigl[
       \beta\Delta_0(t,y_1)
       +\Delta_0(r,y_1)\bigl(\min\{c(r),\alpha\}-\beta\bigr)
     \Bigr]\\
 &+\Delta_0(y_1,y_3)\Bigl[
       \alpha\Delta_0(t,y_2)
       +\Delta_0(r,y_2)\bigl(\min\{c(r),\beta\}-\alpha\bigr)
     \Bigr]\\
 &+\Delta_0(y_1,y_2)\Bigl[
       \alpha\Delta_0(t,y_3)
       +\Delta_0(r,y_3)\bigl(\min\{c(r),\gamma\}-\alpha\bigr)
     \Bigr].
\end{aligned}
\end{equation}

There are two regimes.  If \(c(r)\ge\beta\), the misreport serves the agent
no better than the second-best unchanged report, and the four-point exchange
gives compensation directly.  If \(c(r)<\beta\), then
\(r=t\pm c(r)\); along either direction, the cyclic contribution is a
one-variable piecewise cubic.  We show that this function has no negative
interior minimum.

\paragraph{Case 1: \(c(r)\ge\beta\).}
In this range, \eqref{eq:triple-collected} becomes
\begin{align*}
 \Phi_r(y_1,y_2,y_3)={}
 &\beta\Delta_0(y_2,y_3)\Delta_0(t,y_1)
 +\alpha\Delta_0(y_1,y_3)\Delta_0(t,y_2)\\
 &+\alpha\Delta_0(y_1,y_2)\Delta_0(t,y_3)\\
 &+(\beta-\alpha)\Bigl[
   \Delta_0(y_1,y_3)\Delta_0(r,y_2)
  +\Delta_0(y_1,y_2)\Delta_0(r,y_3)
  -\Delta_0(y_2,y_3)\Delta_0(r,y_1)\Bigr]\\
 &+\Delta_0(y_1,y_2)\Delta_0(r,y_3)
   \bigl(\min\{c(r),\gamma\}-\beta\bigr).
\end{align*}
Every term outside the bracket is nonnegative.  Part~(ii) of
\cref{lem:delta0-geometry} gives
\[
 \Delta_0(y_2,y_3)\Delta_0(r,y_1)
 \le \Delta_0(y_1,y_3)\Delta_0(r,y_2)
   +\Delta_0(y_1,y_2)\Delta_0(r,y_3),
\]
so the bracket is nonnegative as well.  Hence
\(\Phi_r(y_1,y_2,y_3)\ge0\) whenever \(c(r)\ge\beta\).

\paragraph{Case 2: \(c(r)<\beta\).}
Now \(\beta>0\), and \(c(r)<\beta\le t\) implies \(c(r)<t\).  Hence
\(c(r)=\abs{t-r}\), so
\[
 r=t+\sigma c(r),
 \qquad \sigma\in\{-1,+1\}.
\]
Fix \(\sigma\).  Using a dummy variable \(s\), define
\[
 \begin{aligned}
 f(s)&:=\Phi_{t+\sigma s}(y_1,y_2,y_3)
       &&(0\le s\le\beta),\\
 h_i(s)&:=\Delta_0(t+\sigma s,y_i)
       &&(-\alpha\le s\le\beta).
 \end{aligned}
\]
In the definition of \(f(s)\), the selected-report cost is
\(c(t+\sigma s)=s\).  The larger domain of \(h_i\) is intentional: the
endpoint estimate below uses \(h_i(-\alpha)\).  We have \(f(0)=0\), while
\(f(\beta)\ge0\) by Case~1.  It therefore suffices to show that
\begin{enumerate}[label=\textup{(\alph*)}]
\item \(f\) is nondecreasing on \([0,\alpha]\); and
\item the minimum of \(f\) on \([\alpha,\beta]\) is attained at an endpoint.
\end{enumerate}

We first record the branch geometry.  If \(y_i=0\), then \(h_i\equiv0\).
Otherwise its branch points occur when \(t+\sigma s\in\{0,y_i\}\), at
parameters whose absolute values are \(t\) and \(\abs{t-y_i}\).  Part~(i) of
\cref{lem:delta0-geometry} therefore makes each \(h_i\) affine or concave
quadratic between consecutive branch points.  In particular, none of the
\(h_i\) has a branch point in \((0,\alpha)\); none has one in
\((\alpha,\beta)\).

When \(\alpha<\beta\), we also have \(\alpha<t\), so
\(c(y_1)=\alpha\) forces \(y_1=t\pm\alpha\).  The coordinates \(y_2,y_3\)
lie outside \((t-\beta,t+\beta)\).  An interval of radius
\(\beta-\alpha\) around either \(t-\alpha\) or \(t+\alpha\) lies inside
that central interval, and \(\beta-\alpha\le t-\alpha\).  Thus part~(iii)
of \cref{lem:delta0-geometry} applies with either point as its centre.

These boundary estimates supply the endpoint derivatives needed for
\textup{(a)} and \textup{(b)}.

Because \(f\) and the \(h_i\) are polynomial on each adjacent branch and
continuous at its endpoints, each endpoint derivative used below equals the
corresponding limit of the ordinary derivative from within that branch.  We
write these limits explicitly.

\paragraph{Slope bounds at \(0\) and \(\alpha\).}
Suppose first that \(\alpha>0\).  Then every \(h_i\) is differentiable at
zero.  Differentiating \eqref{eq:triple-collected} for \(s>0\) and then letting
\(s\to0^+\) gives
\begin{align*}
 \lim_{s\to0^+}f'(s)={}
 &\Delta_0(y_2,y_3)\bigl[h_1(0)-\beta h_1'(0)\bigr]\\
 &+\Delta_0(y_1,y_3)\bigl[h_2(0)-\alpha h_2'(0)\bigr]\\
 &+\Delta_0(y_1,y_2)\bigl[h_3(0)-\alpha h_3'(0)\bigr].
\end{align*}
Concavity on \([-\alpha,0]\) gives
\(h_i(0)-\alpha h_i'(0)\ge h_i(-\alpha)\).  Hence
\begin{align*}
 \lim_{s\to0^+}f'(s)\ge{}
 &\Delta_0(y_2,y_3)
 \bigl[h_1(-\alpha)-(\beta-\alpha)h_1'(0)\bigr]+\Delta_0(y_1,y_3)h_2(-\alpha)
  +\Delta_0(y_1,y_2)h_3(-\alpha).
\end{align*}
This is nonnegative if \(\alpha=\beta\) or \(h_1'(0)\le0\).  In the only
remaining case, the piecewise formula in part~(i) forces
\(y_1=t-\sigma\alpha\).  Therefore
\(h_2(-\alpha)=\Delta_0(y_1,y_2)\),
\(h_3(-\alpha)=\Delta_0(y_1,y_3)\), and \(h_1'(0)\le y_1\).  Applying
part~(iii) of \cref{lem:delta0-geometry} at centre \(y_1\), with radius
\(\beta-\alpha\), yields
\[
 2\Delta_0(y_1,y_2)\Delta_0(y_1,y_3)
 \ge(\beta-\alpha)y_1\Delta_0(y_2,y_3).
\]
The two positive terms in the preceding lower bound therefore sum to
\(2\Delta_0(y_1,y_2)\Delta_0(y_1,y_3)\), which covers its only negative
term.  Thus \(\lim_{s\to0^+}f'(s)\ge0\).

At the other boundary, taking \(s\to\alpha^-\) in the derivative formula,
when \(\alpha>0\), gives
\begin{align*}
 \lim_{s\to\alpha^-}f'(s)={}
 &\Delta_0(y_1,y_3)h_2(\alpha)
  +\Delta_0(y_1,y_2)h_3(\alpha)
  +\Delta_0(y_2,y_3)h_1(\alpha)
 -(\beta-\alpha)\Delta_0(y_2,y_3)
   \lim_{s\to\alpha^-}h_1'(s).
\end{align*}
This is nonnegative when \(\alpha=\beta\).  When \(\alpha<\beta\),
taking \(s\to\alpha^+\) in the derivative formula, valid also for
\(\alpha=0\), gives
\begin{align*}
 \lim_{s\to\alpha^+}f'(s)={}
 &\Delta_0(y_1,y_3)h_2(\alpha)
  +\Delta_0(y_1,y_2)h_3(\alpha)
 -(\beta-\alpha)\Delta_0(y_2,y_3)
   \lim_{s\to\alpha^+}h_1'(s).
\end{align*}

Apply part~(iii) of \cref{lem:delta0-geometry} twice, with radius
\(\beta-\alpha\), first at centre \(y_1\) and then at centre
\(t+\sigma\alpha\):
\begin{align*}
 2\Delta_0(y_1,y_2)\Delta_0(y_1,y_3)
 &\ge(\beta-\alpha)y_1\Delta_0(y_2,y_3),\\
 2h_2(\alpha)h_3(\alpha)
 &\ge(\beta-\alpha)(t+\sigma\alpha)\Delta_0(y_2,y_3).
\end{align*}
Multiplying these inequalities bounds the product of the two left-hand
summands below.  Applying Arithmetic Mean–Geometric Mean inequality to those summands yields
\begin{align*}
 \Delta_0(y_1,y_3)h_2(\alpha)
  +\Delta_0(y_1,y_2)h_3(\alpha)\ge(\beta-\alpha)
 \sqrt{y_1(t+\sigma\alpha)}\,\Delta_0(y_2,y_3).
\end{align*}
For either applicable limit \(\lim_{s\to\alpha^-}h_1'(s)\) or
\(\lim_{s\to\alpha^+}h_1'(s)\), the piecewise formula in part~(i) gives the
following bound whenever the limit is positive.  If
\(y_1=t+\sigma\alpha\), it equals \(y_1\); if
\(y_1=t-\sigma\alpha\), it is at most \(t-\alpha\).  In either case it is at
most \(\sqrt{y_1(t+\sigma\alpha)}\).  Since \(h_1(\alpha)\ge0\), the two
limit formulas now imply
\[
 \lim_{s\to\alpha^+}f'(s)\ge0\quad(\alpha<\beta),
 \qquad
 \lim_{s\to\alpha^-}f'(s)\ge0\quad(\alpha>0).
\]

\paragraph{The interval \([0,\alpha]\).}
On \((0,\alpha)\), all three minima in \eqref{eq:triple-collected} equal
\(s\), and no \(h_i\) changes branch.  Hence
\[
 f'''(s)=3\Bigl[
 \Delta_0(y_2,y_3)h_1''(s)
 +\Delta_0(y_1,y_3)h_2''(s)
 +\Delta_0(y_1,y_2)h_3''(s)\Bigr]\le0.
\]
Thus \(f'\) is concave.  If \(\alpha>0\), its two endpoint limits are
nonnegative, so concavity gives \(f'\ge0\) on \((0,\alpha)\), and \(f\) is
nondecreasing on \([0,\alpha]\).  For \(\alpha=0\), the assertion is
vacuous.  In either case,
\[
 f(\alpha)\ge f(0)=0.
\]

\paragraph{The interval \([\alpha,\beta]\).}
This interval needs attention only when \(\alpha<\beta\).  No \(h_i\) changes
branch in its interior.  Up to terms independent of \(s\),
\[
 f(s)=-(\beta-\alpha)\Delta_0(y_2,y_3)h_1(s)
 +(s-\alpha)\Bigl[
 \Delta_0(y_1,y_3)h_2(s)+\Delta_0(y_1,y_2)h_3(s)\Bigr].
\]
Each \(h_i\) is affine or quadratic on this interval, so
\[
 f'''(s)=3\Bigl[
 \Delta_0(y_1,y_3)h_2''(s)
 +\Delta_0(y_1,y_2)h_3''(s)\Bigr]\le0.
\]
Therefore \(f'\) is concave.  Because
\(\lim_{s\to\alpha^+}f'(s)\ge0\), adjoining \(\alpha\) to the nonnegative
superlevel set of \(f'\) yields an interval.  Consequently \(f\) first rises
and may then fall, so its minimum on \([\alpha,\beta]\) occurs at an endpoint.
We already proved \(f(\alpha)\ge0\), and Case~1 gives \(f(\beta)\ge0\).  Thus
\(f(s)\ge0\) throughout.

The weak inequalities include tied service costs and coincident reports;
continuity covers the branch endpoints.  This proves
\cref{lem:cyclic-compensation}.

\section{Social cost approximation for
\texorpdfstring{\(\RAV_k\)}{RAV-k}}
\label{sec:approximation}

\begin{restatedmainthm}{thm:intro-approximation}{Social-cost approximation}
For every \(k \geq 2\) and every profile \(\mathbf{x}\),
\[
  \E\!\left[
    \SC\bigl(\mathbf{x},\RAVk{k}(\mathbf{x})\bigr)
  \right]
  \leq 4(k-1)\OPT_k(\mathbf{x}).
\]
In particular, the strategyproof-in-expectation mechanism \(\RAVk{3}\)
achieves an approximation factor of \(8\) for social cost.
\end{restatedmainthm}

The proof has three steps.  We condition on the anchor, group the outcomes
that differ only in which agent is left out, and compare each group with an
optimal \(k\)-facility solution.  We then average these bounds over the
uniformly chosen anchor.

\subsection{Conditioning on the anchor and grouping outcomes}

Fix a profile \(\mathbf{x}\) and an anchor agent \(a\in N\).  Condition on
\(I=a\), put \(p:=x_a\), and, for
\(S\in\binom{N\setminus\set{a}}{k-1}\), write
\[
  F_{a,S}:=\set{p}\cup\set{x_j:j\in S}.
\]
Whenever \(D_a^{(k)}(\mathbf{x})>0\), the definition of \(\RAVk{k}\) gives
\[
  \Pr(\mathcal{S}=S\mid I=a)
  =\frac{\Delta_p(\mathbf{x}_S)}{D_a^{(k)}(\mathbf{x})}.
\]
Only agents outside \(S\cup\set{a}\) can have positive cost.  Each such pair
\((S,i)\) corresponds to
\[
  (S,i)\longleftrightarrow(U=S\cup\set{i},i),
\]
where \(U\in\binom{N\setminus\set{a}}{k}\) and \(i\in U\); conversely,
\(S=U\setminus\set{i}\).  Hence
\begin{align}
 D_a^{(k)}(\mathbf{x})\,
 \E\!\left[
   \SC\bigl(\mathbf{x},\RAVk{k}(\mathbf{x})\bigr)
   \given I=a
 \right]\notag
 &=
 \sum_{S\in\binom{N\setminus\set{a}}{k-1}}
 \Delta_p(\mathbf{x}_S)
 \sum_{i\in N\setminus(S\cup\set{a})}
 d(x_i,F_{a,S})\notag\\
 &=
 \sum_{U\in\binom{N\setminus\set{a}}{k}}
 \sum_{i\in U}
 \Delta_p\bigl(\mathbf{x}_{U\setminus\set{i}}\bigr)
 d\bigl(x_i,F_{a,U\setminus\set{i}}\bigr).
\label{eq:ravk-grouped-numerator}
\end{align}
For a fixed \(U\), the inner sum covers all \(k\) ways to leave one member of
\(U\) unselected.  For example, when \(k=3\) and \(U=\set{u,v,w}\), its three
terms leave out \(u\), \(v\), and \(w\), respectively.  The next bound applies
to their sum; it need not hold separately for each choice.

There is a simple reason to keep these \(k\) terms together.  The anchor and
the \(k\) agents in \(U\) give \(k+1\) agents, but an optimal solution uses at
most \(k\) facilities.  After assigning agents to optimal facilities as
described below, two of these agents must therefore be assigned to the same
facility.  In one dimension, the agents assigned to a facility form a
contiguous block in report order.  Hence some two agents that are adjacent
among the \(k+1\) reports in \(\set{a}\cup U\) are assigned to the same
facility.  The proof below uses that adjacent pair to bound the complete sum
for \(U\).

Fix an optimal facility set
\[
  F^*=\set{f_1,\ldots,f_\ell},
  \qquad f_1<\cdots<f_\ell,
  \qquad \ell\le k.
\]
Assign every agent to a nearest facility in \(F^*\), breaking ties toward the
leftmost facility.  For \(f\in F^*\), let
\[
  N_f:=\set{i\in N:i\text{ is assigned to }f},
  \qquad
  C_f^*:=\sum_{i\in N_f}\abs{x_i-f}.
\]
Then \(\sum_{f\in F^*}C_f^*=\OPT_k(\mathbf{x})\).  Each \(N_f\) is contiguous
in report order because the points nearest to a fixed facility form an
interval on the line.  In particular, co-located agents have the same
assignment.

\subsection{Bounding one group}

For an anchor \(a\in N_f\), associate with every agent \(i\) the quantity
\[
  b_i^a:=
  \begin{cases}
    (2k-2)\abs{x_i-x_a},&i\in N_f,\\
    (2k-1)d(x_i,F^*),&i\notin N_f.
  \end{cases}
\]
These quantities are used only in the following comparison.  Notice that
\(b_a^a=0\).  The factor \(2k-1\) will be used when two neighboring agents
belong to another optimal cluster; the factor \(2k-2\) will be used when a
neighbor belongs to the anchor's cluster.

\begin{lemma}[Bound for one group]
\label{lem:ravk-local-omission}
If \(a\in N_f\), then every
\(U\in\binom{N\setminus\set{a}}{k}\) satisfies
\begin{align}
 \sum_{i\in U}
 \Delta_p\bigl(\mathbf{x}_{U\setminus\set{i}}\bigr)
 d\bigl(x_i,F_{a,U\setminus\set{i}}\bigr)
 \le
 \sum_{i\in U}
 \Delta_p\bigl(\mathbf{x}_{U\setminus\set{i}}\bigr)b_i^a.
\label{eq:ravk-local-omission}
\end{align}
\end{lemma}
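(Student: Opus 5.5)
We would prove \eqref{eq:ravk-local-omission} by writing both sides as multiples of a single gap product, so that the comparison becomes a counting argument over gaps. Sort the \(k+1\) reports of \(\set{a}\cup U\) as \(z_0\le z_1\le\cdots\le z_k\), with gaps \(g_m:=z_m-z_{m-1}\) for \(m=1,\ldots,k\), and put \(P:=\prod_{m=1}^k g_m\). If \(P=0\), we handle the degenerate configurations by continuity, or directly, since then most weights vanish. Otherwise, omitting an interior point \(z_m\) merges its two gaps, so
\[
\Delta_p\bigl(\mathbf{x}_{U\setminus\set{i}}\bigr)=P\Bigl(\tfrac1{g_m}+\tfrac1{g_{m+1}}\Bigr),
\]
while omitting an endpoint \(z_0\) or \(z_k\) gives \(P/g_1\) or \(P/g_k\). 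In every case the weight of omitting \(z_m\) equals \(P\) times the sum of the reciprocals of the gaps incident to \(z_m\).

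\emph{Upper bound on the left side.} When \(z_m\) is omitted, both of its neighbours remain open, so \(d\le\min\) of its incident gaps. Hence each term on the left is at most \(P\) times the number of gaps incident to \(z_m\). We charge this bound gap by gap. A gap contributes at most \(2P\) from its two endpoints, but only \(P\) if one endpoint is the anchor, because the anchor is never omitted. Therefore the left side is at most \(\bigl(2k-\deg(a)\bigr)P\), where \(\deg(a)\in\set{1,2}\) is the number of gaps incident to the anchor. So the bound is \((2k-1)P\) in general, and \((2k-2)P\) when the anchor is interior.

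\emph{Lower bound on the right side.} By the pigeonhole remark preceding the lemma, some adjacent pair \(z_j,z_{j+1}\) is assigned to a common optimal facility \(g\). Both omissions, of \(z_j\) and of \(z_{j+1}\), carry weight at least \(P/g_{j+1}\). We split into three cases.

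\begin{enumerate}
\item If \(g\ne f\), neither point is the anchor, and \(g_{j+1}\le d(z_j,F^*)+d(z_{j+1},F^*)\). Hence the two corresponding right-hand terms give at least \((2k-1)P\).
\item If \(g=f\) and the pair contains the anchor, say \(z_j=x_a\), the neighbour contributes \((2k-2)\abs{z_{j+1}-x_a}\cdot P/g_{j+1}=(2k-2)P\).
\item If \(g=f\) and the pair avoids the anchor, the two \(b\)-values are \((2k-2)\) times distances to \(x_a\), whose sum is at least \(g_{j+1}\). This again gives \((2k-2)P\), and more whenever the anchor lies outside the pair's span.
\end{enumerate}

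The main obstacle is matching these against the \((2k-1)P\) upper bound when only \((2k-2)P\) is available from the chosen pair, that is, in cases~2 and~3 with the anchor an endpoint of the sorted list. There we would sharpen both sides. First, on the left, the gap between the anchor and its unique neighbour contributes only \(P\) times \(d/g\), and \(d\) is at most the distance to \(F_{a,U\setminus\set{i}}\), which is often strictly smaller than that gap. Second, on the right we would use additional terms: every other point \(z_m\) with positive \(b\) contributes \(b\) times weight at least \(P/g\) for each incident gap \(g\). Points in \(N_f\) on the far side have \(b=(2k-2)\abs{x-x_a}\), which is at least the sum of all gaps back to the anchor, and so supplies the missing unit \(P\).

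Concretely, we would pick the adjacent same-cluster pair closest to the anchor. In case~3 with the anchor at an endpoint, the sum \(\abs{z_j-x_a}+\abs{z_{j+1}-x_a}\) is then at least \(g_{j+1}\) plus twice the distance from the anchor to \(z_j\). This yields an extra \(P\) as soon as that distance is at least \(g_{j+1}/(2k-2)\); the remaining configurations we would treat by the direct charging above. If this bookkeeping fails, we would fall back to a gap-by-gap charging scheme in which each gap not adjacent to the anchor is paid by the \(b\)-values of its endpoints through the triangle inequality \(g\le d(z,F^*)+d(z',F^*)+\abs{c-c'}\), with clusters ordered along the line.
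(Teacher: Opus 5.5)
\paragraph{A genuine gap in the endpoint-anchor case.}
Several parts of your proposal match the paper's proof: the normalisation by the gap product, the charging bound $L\le(2k-\deg(a))P$, your case~1 (two adjacent agents in a cluster other than $N_f$, giving $R\ge(2k-1)P$), and case~2 with an interior anchor. The gap is exactly the case you flag as the main obstacle: the anchor is an endpoint, say $z_0=x_a$, and the only adjacent same-cluster pair lies in $N_f$. Your case~3 never needs separate treatment. By contiguity of $N_f$, the anchor's immediate neighbour toward such a pair is itself in $N_f$, so you may always take the pair to be $z_0,z_1$.

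Neither proposed sharpening closes this case. Whenever $g_1\le g_2$, the gap $g_1$ next to the anchor really does receive a full $P$ on the left. Also, there need not be any points of $N_f$ beyond $z_1$. Take $k=2$, $z_0=0$, $z_1=1\in N_f$, and $z_2=11$ located at another optimal facility, so $b^a_{i_2}=0$. Then $P=10$ and $L=11\cdot1+1\cdot10=21>20=(2k-2)P$. Your remaining suggestions (the threshold $g_{j+1}/(2k-2)$, or a triangle inequality through cluster centres) are not carried out, and they do not supply the missing amount here.

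\paragraph{The missing observation.}
In case~2 you used only half of the weight attached to omitting $z_1$. That weight is $P(1/g_1+1/g_2)$, so $b^a_{i_1}=(2k-2)g_1$ contributes $(2k-2)(1+g_1/g_2)P$ to $R$, not merely $(2k-2)P$. In the example, this gives $R=22$.

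On the left, omitting $z_1$ contributes exactly $(1+\rho_1)P$, where $\rho_1=\min\{g_1,g_2\}/\max\{g_1,g_2\}\le g_1/g_2$. In your charging language, its charge to $g_2$ is $P\,d/g_2\le Pg_1/g_2$, because the anchor stays open. Meanwhile $z_k$ contributes $P$, and each of the other $k-2$ interior omissions contributes at most $2P$. Hence
\[
L\le(2k-2+\rho_1)P\le(2k-2)(1+g_1/g_2)P\le R,
\]
using $2k-2\ge1$. This is exactly the paper's $q=0$ case.

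\paragraph{The degenerate case.}
For $P=0$, ``most weights vanish'' is not enough; you need every left-hand term to be zero. This holds because, after omitting $x_i$, either the remaining coordinates still contain a repetition (zero volume), or $x_i$ coincides with a remaining open coordinate (zero distance). A continuity argument would additionally have to preserve the contiguous cluster structure under perturbation.
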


The left-hand side is the contribution of \(U\) to
\eqref{eq:ravk-grouped-numerator}.  The right-hand side retains the volume
weight assigned to each outcome and replaces the left-out agent's service
distance by \(b_i^a\).  Thus \eqref{eq:ravk-local-omission} compares the sums
over all \(k\) possible left-out agents; it is not a separate inequality for
each \(i\).

\begin{proof}
Write \(L\) and \(R\) for the left- and right-hand sides of
\eqref{eq:ravk-local-omission}.  First suppose that the \(k+1\) coordinates
of the anchor and the agents in \(U\) are not all distinct.  For each
\(i\in U\), either the coordinates that remain after \(i\) is left out still
contain a repetition, so one adjacent gap and hence the volume is zero, or
\(x_i\) coincides with a remaining coordinate, so a facility remains open at
\(x_i\).  Thus every term in \(L\) is zero, while \(R\ge0\).

Now suppose that all \(k+1\) coordinates are distinct.  Write
\[
\begin{gathered}
  z_0<z_1<\cdots<z_k,\qquad p=z_q,\qquad
  g_j:=z_j-z_{j-1}\quad(1\le j\le k),\\
  G:=\prod_{j=1}^k g_j>0,\qquad
  \rho_j:=\frac{\min\set{g_j,g_{j+1}}}
                 {\max\set{g_j,g_{j+1}}}
  \quad(1\le j\le k-1).
\end{gathered}
\]
Let \(i_j\) be the agent whose report is \(z_j\), so \(i_q=a\).  Leaving out
an endpoint removes its incident gap from the volume product.  Leaving out an
interior report \(z_j\) replaces the two factors \(g_jg_{j+1}\) by their sum
\(g_j+g_{j+1}\).  The corresponding volume and service distance are
\[
\begin{array}{c|c|c}
  j\ne q
  &\Delta_p(\mathbf{x}_{U\setminus\set{i_j}})/G
  &d(x_{i_j},F_{a,U\setminus\set{i_j}})\\ \hline
  0&1/g_1&g_1\\
  1\le j\le k-1&1/g_j+1/g_{j+1}&\min\set{g_j,g_{j+1}}\\
  k&1/g_k&g_k
\end{array}
\]
In each endpoint row, the volume divided by \(G\) is \(1/g\) and the service
distance is \(g\), so their product is one.  In the interior row, the product
of the two displayed entries is \(1+\rho_j\), by the definition of
\(\rho_j\).
Thus each of the \(k\) ways to leave out a non-anchor agent contributes one
unit to \(L/G\), and leaving out an interior agent contributes the additional
ratio \(\rho_j\).  The anchor at position \(q\) is never left out, so
\begin{equation}\label{eq:ravk-normalized-omission}
  \frac{L}{G}
  =k+\sum_{\substack{1\le j\le k-1\\j\ne q}}\rho_j.
\end{equation}
Since every \(\rho_j\le1\), this is at most \(2k-1\).  If the anchor is
interior, one interior position is absent from the sum, so \(L/G\le2k-2\).

The middle column of the table also gives the coefficient of each \(b_{i_j}^a\)
in \(R/G\).  Since \(b_{i_q}^a=b_a^a=0\), we may include the anchor's zero term
and collect the two \(b\)-terms associated with the endpoints of each gap:
\[
\begin{aligned}
  \frac{R}{G}
  &=
  \frac{b_{i_0}^a}{g_1}
  +\sum_{j=1}^{k-1}b_{i_j}^a
    \left(\frac1{g_j}+\frac1{g_{j+1}}\right)
  +\frac{b_{i_k}^a}{g_k}\\
  &=\sum_{h=1}^k
    \frac{b_{i_{h-1}}^a+b_{i_h}^a}{g_h}.
\end{aligned}
\]
This is the gap-by-gap form of the right-hand side: each summand
uses one gap and the two agents at its endpoints.  Indeed, the term
\(b_{i_j}^a/g_j\) comes from the gap immediately to the left of \(z_j\), and
\(b_{i_j}^a/g_{j+1}\) comes from the gap immediately to its right.  Collecting
equal gap denominators produces the last sum.  Every summand is nonnegative.

It remains to compare this nonnegative sum with the bound for \(L/G\).  The
\(k+1\) agents are assigned to at most \(k\) optimal facilities.  Suppose first
that a cluster \(N_{\widehat f}\ne N_f\) contains at least two of them.
Because that cluster is contiguous, two of its agents occupy adjacent
positions, say \(z_{h-1}\) and \(z_h\).  Neither is the anchor, and the
triangle inequality gives
\[
  \frac{R}{G}\ge
  (2k-1)
  \frac{\abs{z_{h-1}-\widehat f}+\abs{z_h-\widehat f}}
       {z_h-z_{h-1}}
  \ge2k-1\ge\frac{L}{G}.
\]

Otherwise, every cluster other than \(N_f\) contains at most one of these
agents.  Since there are at most \(k-1\) such clusters, \(N_f\) contains the
anchor and another agent.  Contiguity implies that an immediate neighbor
\(z_j\) of the anchor also lies in \(N_f\).  If \(0<q<k\), then
\[
  \frac{R}{G}\ge
  \frac{b_{i_j}^a}{\abs{z_j-p}}
  =2k-2\ge\frac{L}{G}.
\]
If \(q=0\), then \(z_1\in N_f\), and
\[
  \frac{L}{G}
  \le2k-2+\rho_1
  \le(2k-2)\left(1+\frac{g_1}{g_2}\right)
  =b_{i_1}^a\left(\frac1{g_1}+\frac1{g_2}\right)
  \le\frac{R}{G}.
\]
Here the second inequality uses \(2k-2\ge1\) and
\(\rho_1\le g_1/g_2\).  The case \(q=k\) follows by reflecting the line.
When \(k=2\), there are no other ratios to bound, so the same endpoint
calculation applies.  Hence \(L\le R\) in every case.
\end{proof}

\subsection{Averaging over the anchor}

\begin{proof}[Proof of \cref{thm:intro-approximation}]
Fix \(k\ge2\).  If the profile has fewer than \(k\) distinct locations, the
fallback rule opens at every distinct location and has zero social cost.  If
\(n=k\), every reported location is opened and the social cost is again zero.
We may therefore assume that \(n\ge k+1\) and that the profile has at least
\(k\) distinct locations.  Then \(D_a^{(k)}(\mathbf{x})>0\) for every anchor
\(a\).

Fix an anchor \(a\), and let \(N_f\) be its optimal cluster.  Apply
\cref{lem:ravk-local-omission} to each \(U\) in
\eqref{eq:ravk-grouped-numerator}, then reverse the earlier bijection:
\begin{align*}
 D_a^{(k)}(\mathbf{x})
 \E\!\left[
   \SC\bigl(\mathbf{x},\RAVk{k}(\mathbf{x})\bigr)
   \given I=a
 \right]
 &\le
 \sum_{U\in\binom{N\setminus\set{a}}{k}}
 \sum_{i\in U}
 \Delta_p\bigl(\mathbf{x}_{U\setminus\set{i}}\bigr)b_i^a\\
 &=
 \sum_{S\in\binom{N\setminus\set{a}}{k-1}}
 \Delta_p(\mathbf{x}_S)
 \sum_{i\in N\setminus(S\cup\set{a})}b_i^a\\
 &\le
 D_a^{(k)}(\mathbf{x})
 \sum_{i\in N\setminus\set{a}}b_i^a.
\end{align*}
The last inequality holds because every \(b_i^a\) is nonnegative.  After
dividing by \(D_a^{(k)}(\mathbf{x})\) and substituting the definition of
\(b_i^a\), we obtain
\[
 \E\!\left[
   \SC\bigl(\mathbf{x},\RAVk{k}(\mathbf{x})\bigr)
   \given I=a
 \right]
 \le
 (2k-1)\bigl(\OPT_k(\mathbf{x})-C_f^*\bigr)
 +(2k-2)\sum_{j\in N_f}\abs{x_j-x_a}.
\]
The quantity \(\OPT_k(\mathbf{x})-C_f^*\) is the total optimal cost of the
clusters other than the anchor's cluster.  The second sum ranges only over
agents in the anchor's cluster.

For \(f\in F^*\), let \(n_f:=\abs{N_f}\).  When we average over all possible
anchors in \(N_f\), the second term above contains the distance between every
ordered pair of agents in that cluster.  For each pair, the triangle
inequality bounds their distance by the sum of their distances to \(f\).
Summing this inequality over all pairs gives
\begin{equation}\label{eq:optimal-cluster-pairwise-bound}
  \sum_{a\in N_f}\sum_{j\in N_f}\abs{x_j-x_a}
  \le
  \sum_{a\in N_f}\sum_{j\in N_f}
  \bigl(\abs{x_j-f}+\abs{x_a-f}\bigr)
  =2n_fC_f^*.
\end{equation}
The anchor is uniform on \(N\).  Averaging the conditional bound above,
grouping anchors by their optimal cluster, and using
\eqref{eq:optimal-cluster-pairwise-bound} yields
\begin{align*}
 \E\!\left[
   \SC\bigl(\mathbf{x},\RAVk{k}(\mathbf{x})\bigr)
 \right]
 &\le
 (2k-1)\OPT_k(\mathbf{x})
 -\frac{2k-1}{n}\sum_{f\in F^*}n_fC_f^*
 +\frac{2(2k-2)}{n}\sum_{f\in F^*}n_fC_f^*\\
 &=
 (2k-1)\OPT_k(\mathbf{x})
 +\frac{2k-3}{n}\sum_{f\in F^*}n_fC_f^*\\
 &\le
 \bigl((2k-1)+(2k-3)\bigr)\OPT_k(\mathbf{x})\\
 &=4(k-1)\OPT_k(\mathbf{x}).
\end{align*}
The first line uses \(\sum_f n_f=n\); the last inequality uses
\(n_f\le n\) and \(\sum_f C_f^*=\OPT_k(\mathbf{x})\).  The coefficient
\(2k-3\) is \(2(2k-2)-(2k-1)\): the within-cluster term contributes the first
quantity, while excluding the anchor's cluster from the other term subtracts
the second.  For \(k=3\), the factor is \(8\).
\end{proof}

\section{The truthfulness boundary for Random-Anchor Volume}
\label{sec:general-incentive-failure}

As in the proof of \cref{thm:rav3-sp}, it is useful to begin by conditioning
on the identity of the anchor.  Once the anchor is fixed, only the
second-stage volume sampling remains, so the incentive comparison is much
simpler.  For the positive result, it was enough to prove the truthfulness
inequality for every possible anchor and then average.  The negative direction
has an important asymmetry: a profitable deviation for one fixed anchor does
not by itself remain profitable after averaging over the other possible
anchors.

We therefore separate the conditional calculation from the passage back to
the mechanism as defined.  First, we exhibit a strict conditional deviation
for \(\RAVk{4}\) with the anchor at \(0\).  We then add sufficiently many
agents at the anchor coordinate.  Whenever one of those agents is selected as
the anchor, the same profitable conditional deviation reappears, while the
combined contribution from all other possible anchors remains bounded.  This
produces a manipulation of the mechanism with a uniformly chosen anchor.  The
passage is needed already for \(k=4\).  To obtain the result for \(k>4\), we
first add sufficiently distant reports while preserving the conditional
deviation and then apply the same passage to the full mechanism.

The idea behind the four-facility example is simple.  The deviating agent has
many other agents at her true location.  Moving a small distance creates many
new positive-volume selections that contain both her alternative report and a
facility at her true location, so all of those outcomes still cost her zero.
The multiplicities of the negative reports are chosen so that each negative
group has the same multiplicity-weighted distance from the anchor.  This
balance makes the added zero-cost volume reduce the probability of costly
outcomes by more than the small direct cost created by the move.

\subsection{A four-facility deviation conditional on the anchor}

\begin{proposition}[A profitable report conditional on the anchor]
\label{prop:rav4-fixed-anchor-violation}
Conditional on a distinguished agent at \(p=0\) being selected as the anchor,
there is a finite rational instance in which an agent with true location
\(t=1\) strictly lowers her expected cost under \(\RAVk{4}\) by reporting
\(r=1001/1000\).  The total adjacent-gap volume is positive under both reports.
\end{proposition}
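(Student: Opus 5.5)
The plan is to build an explicit instance by following the intuition sketched just before the proposition, and then to certify the strict inequality by exact rational arithmetic. Conditioning on the anchor at \(p=0\), the deviator's conditional expected cost under a report \(z\) can be written, as in \cref{sec:fixed-anchor} but now with triples of non-anchor reports, as
\[
 C(z)=\frac{A+B(z)}{Z+U(z)}.
\]
Here \(Z\) and \(A\) are the total volume, and the cost-weighted volume, of triples not containing the deviator. \(U(z)\) and \(B(z)\) are the corresponding quantities for triples containing her report \(z\). The goal is to show \(C(1001/1000)<C(1)\) and that both denominators are positive.

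\textbf{The instance.} I would place \(m\) further agents at the true location \(1\) (with \(m\) large), and put a few groups of agents at negative coordinates \(-d_1,\dots,-d_s\) with multiplicities \(m_1,\dots,m_s\) satisfying \(m_1d_1=\cdots=m_sd_s\). A small number of positive reports beyond \(1\) may also be needed so that some costly outcomes exist. I would then track how each class of triples changes when the report moves from \(t=1\) to \(r=1+\varepsilon\).
\begin{itemize}
\item Triples containing the deviator and a copy at \(1\) have zero volume under truth. Under \(r\) they have volume of order \(\varepsilon\) times the remaining gaps, and they cost her zero because a facility stays at \(1\). They therefore add roughly \(\varepsilon m V_1\) to the denominator \(U\) and nothing to the numerator.
\item Triples containing \(r\) but no copy at \(1\) now cost her \(\min\{\varepsilon,\cdot\}\le\varepsilon\) instead of \(0\). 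They add about \(\varepsilon W_2\) to \(B\) and change \(U\) only by \(O(\varepsilon)\) without an \(m\) factor.
\item Triples not involving the deviator are unchanged. Their volume is \(Z=Z_0+mZ_1\), where the copies at \(1\) contribute zero cost, and \(A=A_0\) comes only from triples avoiding location \(1\).
\end{itemize}

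\textbf{The first-order condition.} With these terms, the numerator of \(C(r)-C(1)\) is, to first order in \(\varepsilon\),
\[
 \varepsilon\Bigl[W_2\bigl(Z_0+(m+1)Z_1\bigr)-A_0\,m\,V_1\Bigr].
\]
Both leading terms are linear in \(m\), so the sign is governed by comparing \(W_2Z_1\) with \(A_0V_1\). The balanced multiplicities \(m_id_i=\text{const}\) are meant to make \(V_1\), which involves negative reports paired with the gap created at \(1\), large relative to \(W_2\) and \(Z_1\). They do this while keeping \(A_0\) bounded away from zero through costly triples built from negative and far-positive reports.

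\textbf{Main obstacle.} The hard step is choosing concrete small rational parameters for which \(A_0V_1>W_2Z_1\) holds with a margin that survives the second-order terms at \(\varepsilon=1/1000\). These include the changed \(\min\) costs for triples whose reports are within \(\varepsilon\) of \(1\), and concavity effects in \(\Delta_0\). I would first optimize the leading-order expression symbolically over \(s\), \(d_i\) and \(m\), trying \(s=2\) or \(3\) groups. I would then fix rational values and evaluate \(C(1)\) and \(C(1001/1000)\) exactly, by summing the volumes with multiplicities via counting formulas over location classes. Finally I would confirm the strict inequality and that both totals are positive. Positivity is immediate because the profile has at least four distinct locations.
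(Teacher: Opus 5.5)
\textbf{There is a genuine gap.} Your outline has the right ingredients, and they are the ones in the paper's counterexample. Copies at \(1\) turn zero-volume triples into positive-volume, zero-cost ones; balanced negative groups supply weight; and a costly report sits beyond \(1\). But the proposition is an existence claim, and its entire content is a concrete instance together with a verification of \(C_r<C_t\). That is exactly the step you leave open.

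Moreover, the search you propose would fail. Take your configuration with a single positive report at \(2\), which is the one the paper uses, and \(m\) copies at \(1\). Let the negative groups have total weight \(S=\sum_i m_id_i\), and let \(W_2,W_3\) be their cross-group pair and triple volumes. A direct computation shows that the coefficient of \(\varepsilon\) in \(B(r)(Z+U(t))-A(U(r)-U(t))\) is
\[
 m\bigl[(S+W_2)^2-(1+S)(2W_2+W_3)\bigr]+S^2+4SW_2+SW_3+W_2^2 .
\]
So, to first order in \(\varepsilon\) and for every \(m\), you need \(W_3(1+S)+2W_2-W_2^2-S^2>0\).

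\begin{itemize}
\item With three distinct negative coordinates, \(W_2\le S^2/3\) and \(3SW_3\le W_2^2\). Also \(W_3\le (W_2/3)^{3/2}\), by AM--GM on the three pair terms of \(W_2\). Together these give \(W_3(1+S)+2W_2-W_2^2-S^2\le (W_2/3)^{3/2}-W_2-\tfrac23W_2^2\), which forces the quantity to be negative.
\item With two groups, \(W_3=0\) and \(W_2\le S^2/4\), which already suffice.
\item Even for \(s\) balanced, widely separated groups, where \(W_2\to\frac{s-1}{2s}S^2\) and \(W_3\to\frac{(s-1)(s-2)}{6s^2}S^3\), the inequality holds for some \(S\) only when \(s\ge 18\).
\end{itemize}

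The missing idea is to use many balanced groups, so that \(W_2\) and \(W_3\) approach their ceilings \(S^2/2\) and \(S^3/6\). The paper's instance is:
\begin{itemize}
\item \(100\) copies at \(1\);
\item one agent at \(2\);
\item for \(1\le j\le 100\), a group of \(2^j\) agents at \(-1/(100\cdot 2^j)\), so that \(S=1\).
\end{itemize}
It evaluates \(W_2\) and \(W_3\) exactly by geometric sums, obtaining \(0.485<W_2<0.486\) and \(0.152<W_3<0.153\). It then writes the numerator divided by \(\varepsilon\) exactly as a quadratic in \(\varepsilon\). Using monotonicity on that rectangle, it shows the leading bracket \(101W_2^2-196W_2-199W_3+101\) is below \(-0.55\), while the \(\varepsilon\)-bracket is below \(188\).

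Even with \(100\) groups the margin is thin. This is why optimizing your leading-order expression over two or three groups cannot produce an instance in this family.
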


\begin{proof}
Fix the anchor at \(p=0\), let the agent's true location be \(t=1\), and put
\[
  \varepsilon=\frac{1}{1000},
  \qquad r=1+\varepsilon=\frac{1001}{1000}.
\]
Besides this agent, use the following reports.

\begin{center}
\begin{tabular}{@{}lll@{}}
\toprule
Role & Reported location & Number of agents \\
\midrule
Other agents at the true location & \(1\) & \(100\) \\
Right-hand agent & \(2\) & \(1\) \\
Negative group \(j\), \(1\le j\le100\)
  & \(-1/(100\cdot2^j)\) & \(2^j\) \\
\bottomrule
\end{tabular}
\end{center}

No non-anchor report is at \(0\), under either the truthful report or the
alternative report.

All coordinates are rational and the population is finite.  When the agent
reports truthfully, every selected triple containing her opens a facility at
\(1\), so those outcomes cost her zero.  After she reports \(1+\varepsilon\),
any selected triple that also contains one of the other \(100\) reports at
\(1\) still costs her zero.  The move makes many such triples have positive
volume: at the truthful profile, selecting two agents at coordinate \(1\)
would instead create a zero gap.

\paragraph{Weights contributed by the negative groups.}
Group \(j\) contains \(2^j\) agents, each at distance \(1/(100\cdot2^j)\)
from the anchor.  Thus every group has the same multiplicity-weighted distance
from the anchor,
\[
  2^j\frac{1}{100\cdot2^j}=\frac{1}{100},
\]
and the total over all \(100\) groups is \(1\).

Let \(W_2\) be the total adjacent-gap volume of all choices of one agent from
each of two distinct negative groups.  Define \(W_3\) analogously for three
distinct negative groups.  Multiplicity is included in both totals.  Choosing
two agents from the same group gives volume zero because their reports
coincide.  Thus \(W_2\) and \(W_3\) are simply sums of the volumes
\(\Delta_0\) defined in \cref{eq:delta-k-definition}, restricted to choices
from two or three distinct negative groups.  For \(i<j\), the contribution of
groups \(i\) and \(j\) to \(W_2\) is
\[
  2^i2^j\frac{1}{100\cdot2^j}
  \left(\frac{1}{100\cdot2^i}-\frac{1}{100\cdot2^j}\right)
  =\frac{1-2^{i-j}}{100^2}.
\]
The corresponding three-group calculation gives
\begin{align*}
 W_2
 &=\frac{1}{100^2}\sum_{1\le i<j\le100}\bigl(1-2^{i-j}\bigr),\\
 W_3
 &=\frac{1}{100^3}\sum_{1\le i<j<h\le100}
   \bigl(1-2^{i-j}\bigr)\bigl(1-2^{j-h}\bigr).
\end{align*}
Summing the geometric series gives the exact values
\begin{align*}
 W_2
 &=\frac12-\frac{3}{200}+\frac{2}{100^2}
   -\frac{1}{100^2 2^{99}},\\
 W_3
 &=\frac16-\frac{3}{200}+\frac{19}{3\cdot100^2}
   -\frac{12}{100^3}+\frac{212}{2^{100}100^3}.
\end{align*}
In particular,
\begin{equation}
  0.485<W_2<0.486,
  \qquad
  0.152<W_3<0.153.
  \label{eq:rav4-left-weight-bounds}
\end{equation}

\paragraph{The conditional cost comparison.}
We reuse the notation \(Z,A,U(z),B(r),C_t,C_r\) defined in
\cref{sec:fixed-anchor}, now with three reports selected after the anchor.
Thus \(Z\) is the total volume of triples that omit the deviating agent, while
\(A\) is their cost-weighted total.  Likewise, \(U(z)\) is the total volume of
triples containing the agent when she reports \(z\), and \(B(r)\) is their
cost-weighted total under the alternative report \(r\).  The positive-volume
triples that omit her are as follows; a negative report means a report from
one of the groups above.

\begin{center}
\begin{tabularx}{\textwidth}{@{}Xcc@{}}
\toprule
Selected reports besides the anchor & Total volume & Agent's cost \\
\midrule
One of the \(100\) other reports at \(1\), the report at \(2\), and one negative
  & \(100\) & \(0\) \\
One of the \(100\) other reports at \(1\) and two negatives
  & \(100W_2\) & \(0\) \\
The report at \(2\) and two negatives
  & \(2W_2\) & \(1\) \\
Three negatives
  & \(W_3\) & \(1\) \\
\bottomrule
\end{tabularx}
\end{center}

Consequently,
\[
  Z=100+102W_2+W_3,
  \qquad
  A=2W_2+W_3.
\]
If the agent reports \(t=1\), the positive-volume triples containing her use
either the report at \(2\) and one negative, or two negatives.  Their total
volume is therefore
\[
  U(t)=1+W_2,
\]
and all of them give the agent cost zero.

If she reports \(r=1+\varepsilon\), the positive-volume triples containing her
are the following.

\begin{center}
\begin{tabularx}{\textwidth}{@{}Xcc@{}}
\toprule
Other two selected reports & Total volume & Agent's cost \\
\midrule
One of the \(100\) other reports at \(1\) and the report at \(2\)
  & \(100\varepsilon(1-\varepsilon)\) & \(0\) \\
One of the \(100\) other reports at \(1\) and one negative
  & \(100\varepsilon\) & \(0\) \\
The report at \(2\) and one negative
  & \(1-\varepsilon^2\) & \(\varepsilon\) \\
Two negatives
  & \((1+\varepsilon)W_2\) & \(\varepsilon\) \\
\bottomrule
\end{tabularx}
\end{center}

Hence
\begin{align*}
 U(r)
 &=100\varepsilon(1-\varepsilon)+100\varepsilon
   +(1-\varepsilon^2)+(1+\varepsilon)W_2,\\
 B(r)
 &=\varepsilon(1-\varepsilon^2)
   +\varepsilon(1+\varepsilon)W_2.
\end{align*}
As in \cref{sec:fixed-anchor}, the two conditional expected costs are
\[
  C_t=\frac{A}{Z+U(t)},
  \qquad
  C_r=\frac{A+B(r)}{Z+U(r)}.
\]
Both denominators are positive because \(Z\ge100\).  Therefore the alternative
report is profitable exactly when the numerator in
\eqref{eq:fixed-anchor-numerator} is negative.  Substitution gives
\begin{align}
&\frac{B(r)\bigl(Z+U(t)\bigr)-A\bigl(U(r)-U(t)\bigr)}{\varepsilon}
\notag\\
&\quad=
 \bigl[101W_2^2-196W_2-199W_3+101\bigr]
\notag\\
&\qquad
 +\varepsilon\bigl[103W_2^2+W_2W_3+303W_2+101W_3\bigr]
\notag\\
&\qquad
 -\varepsilon^2\bigl[103W_2+W_3+101\bigr].
\label{eq:rav4-explicit-cost-comparison}
\end{align}
On the rectangle in \eqref{eq:rav4-left-weight-bounds}, the first bracket is
decreasing in both \(W_2\) and \(W_3\), whereas the second bracket is
increasing in both variables.  The bounds therefore give
\[
  101W_2^2-196W_2-199W_3+101<-0.55
\]
and
\[
  103W_2^2+W_2W_3+303W_2+101W_3<188.
\]
The final bracket in \eqref{eq:rav4-explicit-cost-comparison} is positive.
Since \(\varepsilon=10^{-3}\),
\[
 \frac{B(r)\bigl(Z+U(t)\bigr)-A\bigl(U(r)-U(t)\bigr)}{\varepsilon}
 <-0.55+10^{-3}\cdot188
 =-0.362<0.
\]
Thus \(C_r<C_t\), as claimed.
\end{proof}

\subsection{Returning to the uniformly chosen anchor}

The proposition establishes a profitable deviation only after the anchor
identity has been fixed.  We now pass to the mechanism as defined, in which
the anchor agent is chosen uniformly.

\begin{lemma}[From a fixed anchor to the full mechanism]
\label{lem:conditioned-to-full-mechanism}
Fix \(k\ge2\).  Let \(a\ne i\) be a distinguished agent reporting \(p\), and
suppose agent \(i\), whose true location is \(t\), can instead report \(r\).
Conditional on \(a\) being selected as the anchor, assume that the finite
\(\RAVk{k}\) instance has no non-anchor report at \(p\) under either report of
agent \(i\) and has positive total volume under both reports.  Let \(C_t\) and
\(C_r\) be agent \(i\)'s conditional expected costs when she reports \(t\) and
\(r\), respectively.  If \(C_r<C_t\), then adding finitely many agents who
report \(p\) produces a finite profile on which the same report \(r\) is
profitable for the mechanism with a uniformly chosen anchor.
\end{lemma}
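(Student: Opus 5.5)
Add \(m\) extra agents, each reporting \(p\), and call the original distinguished agent together with these new agents the anchor-point agents. The idea is to show that the conditional behaviour under any anchor at \(p\) is exactly the fixed-anchor instance. We then need two bounds. The gain from anchors at \(p\) grows linearly in \(m\), since it is \((m+1)(C_t-C_r)\) up to normalization. The loss from every other possible anchor stays bounded uniformly in \(m\).

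\textbf{Step 1 (anchors at \(p\)).} Condition on some anchor-point agent being the anchor. Every candidate set that contains another agent reporting \(p\) has a zero adjacent gap, so its volume is zero. The positive-volume candidate sets are therefore exactly those of the original instance; this uses the hypothesis that no non-anchor report sits at \(p\). The volumes coincide as well, so the conditional lottery, and hence the deviator's conditional costs \(C_t\) and \(C_r\), are unchanged. The total volume is positive under both reports, so the fallback rule is not triggered. Each of the \(m+1\) anchor-point agents is selected with probability \(1/(n+m)\), contributing \((m+1)(C_r-C_t)/(n+m)\) to the difference of expected costs.

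\textbf{Step 2 (other anchors).} Condition on a non-anchor-point agent \(b\) being the anchor. If \(b=i\), the truthful cost is \(0\), and the deviation cost is at most some bound \(K\). Otherwise, the conditional cost under either report is a convex combination of distances from \(t\) to facility sets containing \(x_b\), so it lies in \([0,K]\). Here \(K\) can be taken as \(\max|t-x_j|\) over all original reports together with \(r\) and \(p\), which is independent of \(m\). The fallback rule costs at most \(K\) as well. Thus each of these at most \(n\) anchors contributes at most \(K/(n+m)\) to the difference of expected costs, giving a total of at most \(nK/(n+m)\).

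\textbf{Step 3 (combining).} The unconditional difference \(\E[\text{cost}(r)]-\E[\text{cost}(t)]\) is at most
\[
\frac{(m+1)(C_r-C_t)+nK}{n+m}.
\]
Since \(C_r-C_t<0\), choosing \(m>nK/(C_t-C_r)\) makes this expression strictly negative. The resulting profile is finite.

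The only delicate point is Step 1. We must check that adding coincident reports at \(p\) does not change the conditional lottery when the anchor is at \(p\). This is where the zero-gap property of the volume, and the hypothesis that there is no non-anchor report at \(p\), are used. We must also confirm that positivity of the total volume, which excludes the fallback rule, is preserved. Bounding the other anchors by a constant requires only the crude bound \(K\), with no fine analysis.
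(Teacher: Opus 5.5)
Your proposal is correct and follows essentially the same argument as the paper. Each anchor at \(p\) reproduces the original conditional lottery, because extra reports at \(p\) only create zero-volume sets, so these anchors contribute a negative term growing linearly in the number of added agents. Every other anchor contributes a bounded amount independent of that number, because a facility at the anchor's own report is always open. The only cosmetic difference is that you use a single uniform constant \(K\), whereas the paper bounds the anchor \(y_j\) by \(\abs{t-y_j}\) and the deviator's own anchor by \(\abs{t-r}\).
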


\begin{proof}
Let \(y_1,\ldots,y_m\) be the reports of the non-anchor agents other than the
deviating agent, and write
\[
  \delta=C_r-C_t<0.
\]
The number of possible anchor identities away from \(p\) will remain fixed,
whereas every additional identity at \(p\) will contribute another copy of
the same negative conditional difference \(\delta\).

Form a full profile with a total of \(q\) agents at \(p\), including the
original anchor agent.  Thus we add \(q-1\) new agents at \(p\).  If one of the
\(q\) agents at \(p\) is chosen as the anchor, any selected set containing
another agent at \(p\) has zero adjacent-gap volume.  The remaining
positive-volume selections are exactly those of the original conditional
instance.  Each of the \(q\) anchor identities at \(p\) therefore contributes
the same negative cost difference \(\delta\).

Now consider the other possible anchors.  If the agent at \(y_j\) is chosen,
a facility at \(y_j\) is open under both reports.  The conditional cost under
the alternative report is therefore at most \(\abs{t-y_j}\), so the
alternative-report cost minus the truthful cost is also at most this amount.
If the deviating agent is the anchor, the corresponding cost difference is at
most \(\abs{t-r}\).  These bounds remain valid if the fallback rule is used.
Thus the sum of the conditional cost differences over every anchor identity
not at \(p\) is at most
\[
  \abs{t-r}+\sum_{j=1}^m\abs{t-y_j},
\]
which is finite and does not depend on \(q\).

Choose \(q\) large enough that
\[
  q\delta+\abs{t-r}+\sum_{j=1}^m\abs{t-y_j}<0.
\]
The sum of all conditional cost differences is at most the left-hand side
above and is therefore negative.  Since the anchor is uniform, the
unconditional cost difference is that actual sum divided by the total number
of agents.  Hence the report \(r\) is strictly profitable.
\end{proof}

Applying \cref{lem:conditioned-to-full-mechanism} directly to
\cref{prop:rav4-fixed-anchor-violation} already gives a finite profile on
which \(\RAVk{4}\), with its uniformly chosen anchor, is manipulable.  The
next argument is needed only for mechanisms with more than four facilities.

\subsection{Extending the manipulation beyond four facilities}

We next show that the conditional example survives when one more facility is
requested.  The only change is the addition of one agent far to the right.

\begin{lemma}[Adding a distant report preserves the deviation]
\label{lem:add-distant-report}
Fix \(k\ge2\) and condition on a distinguished agent at \(p\) being selected
as the anchor.  Suppose an agent with true location \(t\) can instead report
\(r\), and that, for a finite collection of other agents, the conditional
\(\RAVk{k}\) lotteries have positive total volume under both reports and
satisfy \(C_r<C_t\).  For every sufficiently large coordinate \(L\), adding
one additional agent whose report is fixed at \(L\) makes the same report
\(r\) profitable for the conditional \(\RAVk{k+1}\) lottery.  If all original
coordinates are rational, \(L\) may be chosen to be an integer.
\end{lemma}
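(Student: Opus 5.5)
The plan is to show that, as \(L\to\infty\), the conditional \(\RAVk{k+1}\) lottery is dominated by selections containing the distant agent. Conditioned on those selections, the lottery reduces exactly to the conditional \(\RAVk{k}\) lottery, up to a vanishing correction.

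Write \(y_1,\dots,y_m\) for the other non-anchor reports, and let \(M\) bound the absolute values of \(p,t,r\) and all \(y_j\). Take \(L>M\). For a \(k\)-set \(S\) of non-anchor agents (in the \(\RAVk{k+1}\) instance) that contains the new agent \(\ell\), the point \(L\) is the largest coordinate. Hence
\[
\Delta_p(\mathbf{x}_S)=\Delta_p(\mathbf{x}_{S\setminus\{\ell\}})\,(L-\max(\{p\}\cup \mathbf{x}_{S\setminus\{\ell\}})),
\]
which equals \(L\cdot\Delta_p(\mathbf{x}_{S\setminus\{\ell\}})\) up to an additive error at most \(2M\Delta_p(\mathbf{x}_{S\setminus\{\ell\}})\). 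Sets not containing \(\ell\) have volume bounded by a constant \(K\) independent of \(L\), since all their coordinates lie in \([-M,M]\). Note also that the deviator's cost on a set containing \(\ell\) equals her cost on \(S\setminus\{\ell\}\) together with the anchor: \(L\) is farther from \(t\) than \(p\) is, once \(L>M+2M\). So the facility at \(L\) never serves her.

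Under report \(z\in\{t,r\}\), let \(D_z^{(k)}>0\) and \(N_z^{(k)}\) denote the total volume and the cost-weighted total volume of the original conditional \(\RAVk{k}\) lottery, so that \(C_z=N_z^{(k)}/D_z^{(k)}\). The corresponding \(\RAVk{k+1}\) quantities then satisfy
\[
D_z^{(k+1)}=L D_z^{(k)}+O(1),\qquad N_z^{(k+1)}=L N_z^{(k)}+O(1),
\]
with \(O(1)\) constants depending only on the original instance. In particular, \(D_z^{(k+1)}>0\) for large \(L\), so no fallback occurs. Dividing gives
\[
C_z^{(k+1)}=C_z+O(1/L).
\]
Since \(C_r<C_t\) strictly, choosing \(L\) beyond an explicit threshold yields \(C_r^{(k+1)}<C_t^{(k+1)}\). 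Any sufficiently large integer \(L\) works, which handles the rationality remark.

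The main care point is the bookkeeping for sets that omit \(\ell\), and for sets in which the error term \(\max(\cdot)\) interacts with ties. I will bound these crudely by the total number of \(k\)-subsets times \((2M)^k\), and bound costs by \(|t-p|\). No cancellation is needed, only that these terms are \(L\)-independent. I expect no genuine obstacle beyond stating the threshold cleanly.
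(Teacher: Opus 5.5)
Your proposal is correct and follows essentially the same route as the paper. Sets containing the distant agent have volume equal to the old volume times $(L-\max)$, with unchanged service cost. Sets omitting it contribute $L$-independent terms. Hence each conditional cost has the form $(L D_z C_z+\beta_z)/(L D_z+\alpha_z)\to C_z$, and the strict inequality survives for large $L$. Your explicit bounds on the $O(1)$ terms just make the paper's exact affine-in-$L$ form quantitative.
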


\begin{proof}
For \(z\in\{t,r\}\), let \(D_z>0\) be the total adjacent-gap volume of the
original conditional lottery, so its cost-weighted total is \(D_zC_z\).
Choose \(L\) to the right of the anchor, the two possible reports, and every
other report.  Also choose it far enough that opening a facility at \(L\)
cannot improve the agent's service beyond the facility already open at \(p\).

Consider a set of \(k-1\) reports that could be selected in the original
\(\RAVk{k}\) lottery.  Appending the new rightmost report at \(L\) leaves all
old consecutive gaps unchanged and adds one final gap.  This final gap is
\(L\) minus the old rightmost selected coordinate.  The service cost at the
true location is unchanged because the new facility is too far away.

There are only finitely many old selected sets.  The sets considered by
\(\RAVk{k+1}\) that do not contain the new report contribute quantities
independent of \(L\).  It follows that, for constants
\(\alpha_z,\beta_z\) independent of \(L\), the new conditional expected cost
has the form
\[
  C_z^{+}(L)
  =\frac{L D_zC_z+\beta_z}{L D_z+\alpha_z}.
\]
Consequently,
\[
  \lim_{L\to+\infty}C_z^{+}(L)=C_z.
\]
Since \(C_r<C_t\) is a strict inequality, we also have
\(C_r^{+}(L)<C_t^{+}(L)\) for every sufficiently large finite \(L\).
With rational original coordinates, any sufficiently large integer \(L\)
gives a finite rational instance.
\end{proof}

\begin{theorem}[Failure of \RAV strategyproofness beyond three facilities]
\label{thm:general-nonsp}
For every integer \(k\ge4\), there is a finite real-line profile, an agent
with true location \(t\), and a misreport \(r\) such that
\[
 \E\!\left[d\bigl(t,\RAVk{k}(r,\mathbf{x}_{-i})\bigr)\right]
 <
 \E\!\left[d\bigl(t,\RAVk{k}(t,\mathbf{x}_{-i})\bigr)\right].
\]
Consequently, \(\RAVk{k}\) is not strategyproof in expectation for any
\(k\ge4\).
\end{theorem}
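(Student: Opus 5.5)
The plan is induction on \(k\), starting from the conditional four-facility example and using the two lemmas just proved. For \(k=4\), \cref{prop:rav4-fixed-anchor-violation} gives a finite rational instance. In it, conditional on the distinguished agent at \(p=0\) being the anchor, the report \(r=1001/1000\) strictly lowers the expected cost of the agent with true location \(t=1\). The total adjacent-gap volume is positive under both reports, and no non-anchor report lies at \(0\).

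Next I iterate \cref{lem:add-distant-report} \(k-4\) times. Each application adds one agent at a sufficiently large integer coordinate \(L\) to the right of everything, and turns a conditional \(\RAVk{k'}\) violation into a conditional \(\RAVk{k'+1}\) violation. I will check that the hypotheses propagate at each step:
\begin{enumerate}[label=\textup{(\arabic*)}]
\item The instance stays finite and rational.
\item The strict conditional inequality \(C_r<C_t\) is the lemma's conclusion.
\item The total volume stays positive under both reports. Each old positive-volume selection with the new rightmost report appended has positive volume, since the final gap \(L-\max\) is positive.
\item No non-anchor report lies at \(p\), since \(L>p\).
\end{enumerate}
After these steps we have, for every \(k\ge4\), a finite instance in which the distinguished anchor at \(p\) yields \(C_r<C_t\) for \(\RAVk{k}\).

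Finally, I apply \cref{lem:conditioned-to-full-mechanism} to this instance. Adding \(q-1\) agents at \(p\), with \(q\) large, gives a finite profile on which \(r\) is strictly profitable under the uniformly chosen anchor. That is exactly the displayed inequality, and it shows \(\RAVk{k}\) is not strategyproof in expectation.

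The only delicate point is the order of operations. The distant agents must be added \emph{before} the agents at \(p\). The conditional lemma (\cref{lem:add-distant-report}) controls only the fixed-anchor lottery, whereas \cref{lem:conditioned-to-full-mechanism} needs a conditional violation for the final \(k\). I also need to confirm that the added copies at \(p\) leave the conditional lottery unchanged, which holds because any selection containing another report at \(p\) has zero volume, as noted in that lemma's proof. I expect no substantive obstacle beyond this hypothesis bookkeeping.
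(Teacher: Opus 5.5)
Your proposal is correct and matches the paper's proof: start from the four-facility conditional deviation of \cref{prop:rav4-fixed-anchor-violation}, apply \cref{lem:add-distant-report} $k-4$ times with each new report placed to the right of everything already present, and finish with one application of \cref{lem:conditioned-to-full-mechanism}. Your explicit check that positive total volume under both reports, and the absence of non-anchor reports at $p$, survive each step is bookkeeping the paper leaves implicit, and it is sound.
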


\begin{proof}
For \(k=4\), apply \cref{lem:conditioned-to-full-mechanism} directly to the
strict conditional deviation in
\cref{prop:rav4-fixed-anchor-violation}.  This gives a finite profile on which
the mechanism with its uniformly chosen anchor is manipulable.

Now let \(k>4\).  Starting from the same conditional four-facility deviation,
apply \cref{lem:add-distant-report} exactly \(k-4\) times, placing each new
agent to the right of every coordinate already present.  The resulting
conditional \(\RAVk{k}\) instance retains the strict deviation and has no
additional report at the anchor coordinate.  A final application of
\cref{lem:conditioned-to-full-mechanism} gives a finite manipulation of the
mechanism with its uniformly chosen anchor.
\end{proof}

\begin{restatedmainthm}{thm:intro-boundary}{Exact \RAV truthfulness boundary}
For $k\in\{1,2,3\}$, the mechanism \RAVk{k} is
strategyproof in expectation.  In contrast, for every $k\ge4$, there exists a
profile under which some agent can strictly lower her expected distance by
misreporting.
\end{restatedmainthm}

\begin{proof}
For \(k=1\), the mechanism is random dictatorship: if the potentially deviating agent
is chosen, truth gives cost zero, and otherwise its report does not affect the
outcome.

For \(k=2\), once the anchor is \(p\), a candidate report \(x_j\) receives
weight \(\abs{x_j-p}\).  Thus the second facility-hosting agent is chosen with
probability proportional to its distance from the uniformly chosen anchor,
exactly as in the Proportional Mechanism of Lu, Sun, Wang, and Zhu
\cite{LuEtAl2010}.  Their strategyproofness result therefore applies to
\(\RAVk{2}\); when all reports coincide, both rules simply open at that common
location.

The case \(k=3\) is \cref{thm:rav3-sp}.  Failure for every \(k\ge4\) is
\cref{thm:general-nonsp}.
\end{proof}

\section{Discussion}\label{sec:conclusion}

The Random-Anchor Volume mechanism generalizes the two-facility Proportional
Mechanism of Lu, Sun, Wang, and Zhu~\cite{LuEtAl2010}. For every \(k\geq 2\),
it achieves a \(4(k-1)\)-approximation to the optimal social cost. Crucially,
when \(k=3\), this generalization satisfies strategyproofness in expectation.
It therefore yields the first strategyproof-in-expectation mechanism for three
facilities that achieves a constant-factor approximation to the optimal social
cost.

\bigskip

\noindent\textbf{Global $k$-tuple mechanism.}
While preparing this manuscript, we became aware of a recent paper by Ma and
Peng~\cite{MaPeng2026}, who introduce a new mechanism for \(k=2\), called the
\emph{Global Pair} mechanism. This raises the natural question of whether the
ideas underlying \(\RAV\) can be used to extend their mechanism to an arbitrary
number of facilities.

We define such an extension, which we call the \emph{Global \(k\)-tuple}($\mathrm{Glob}_k$)
mechanism. For each set \(T\in\binom{N}{k}\) of \(k\) distinct agents, arrange
their reported locations, allowing repetitions, in nondecreasing order:
\[
    z_1\leq z_2\leq\cdots\leq z_k.
\]
Assign \(T\) the adjacent-gap weight
\[
    \Delta(T)
    :=
    \prod_{j=1}^{k-1}(z_{j+1}-z_j).
\]
The mechanism then selects a set \(T\in\binom{N}{k}\) with probability
\[
    \Pr(\mathcal{T}=T)
    =
    \frac{\Delta(T)}
         {\displaystyle\sum_{S\in\binom{N}{k}}\Delta(S)}
\]
and opens facilities at the reported locations of the agents in \(T\). If \(\sum_{S\in\binom{N}{k}}\Delta(S)=0\), or equivalently, if the profile contains fewer than
\(k\) distinct reported locations, the mechanism opens a facility at every distinct reported location.

Using an argument similar to that in the proof of
\Cref{thm:intro-approximation}, one can show that \(\mathrm{Glob}_k\)
achieves a \(2k\)-approximation to the optimal social cost for every
\(k\geq 2\). It would be interesting to investigate whether the approach of
Ma and Peng~\cite{MaPeng2026} for \(k=2\), which randomizes between the
Global Pair and Proportional mechanisms to obtain an improved social-cost
approximation, has an analogue for larger \(k\). In particular, one could ask
whether randomizing between \(\mathrm{Glob}_k\) and \(\RAV_k\) yields a
better approximation guarantee than either mechanism alone. This question is
already interesting for \(k=3\).

We were also curious whether \(\mathrm{Glob}_3\) is
strategyproof in expectation. When we posed this question to ChatGPT, it
surprisingly produced a purported certificate suggesting that the mechanism
is indeed strategyproof in expectation. However, because these alternative mechanisms lie beyond the scope of the present paper, we did not pursue this further.
We leave determining
whether \(\mathrm{Glob}_3\) is strategyproof in expectation—and, if so,
obtaining a human-verifiable proof—as an interesting direction for future
work. 

\bigskip

\noindent\textbf{Extending beyond $k\geq 4$.} The most intriguing future direction is to obtain a truthful in expectation mechanism with constant social cost approximation bound for $k\geq 4$. 
One possible approach is to modify the
adjacent-gap volume used by \(\RAV\). In particular, one could replace it by a more general weight of the form
\[
\widetilde{\Delta}_p(\mathbf{x}_S)=\phi(z_0,z_1,...,z_{k-1})
\]
where $z_0,z_1,...,z_{k-1}$ are the locations of the anchor \(p\) and the reported locations in
\(\mathbf{x}_S\), arranged in nondecreasing order. It would be interesting to
determine whether an appropriate choice of \(\phi\) can restore
strategyproofness in expectation for $k\geq 4$ while retaining a constant-factor
approximation guarantee. We leave this as an open direction for future work.

\end{document}